\documentclass{llncs}

\usepackage{graphicx}
\usepackage{subfig}
\usepackage{mathtools}
\usepackage{amssymb}
\usepackage{amssymb,eepic,epic,latexsym}
\usepackage{cite}
\usepackage{color}


%
%

%
%

%

%

\newcommand{\abs}[1]{\left\lvert#1\right\rvert}

\title{On a Fire Fighter's Problem
}

\author{
	Rolf Klein$^1$
	\and Elmar Langetepe$^1$              
	\and Christos Levcopoulos$^2$ 
}

\institute{University of Bonn, Germany,
	Institute of Computer Science I.        
	\and 
	Lund University, Sweden,
	Department of Computer Science.
}

\date{\today}

\pagestyle{plain}

\begin{document}
\maketitle

\begin{abstract}
Suppose that a circular fire spreads in the plane at unit speed. A single fire fighter can build a barrier at speed $v>1$. How large must $v$ be to ensure that the fire can be contained, and how should the fire fighter proceed? We contribute two results. 

First, we analyze the natural curve $\mbox{FF}_v$ that develops when the fighter keeps building, at speed $v$, a barrier along the boundary of the expanding fire. We prove that the behavior of this spiralling curve is governed by a complex function $(e^{w Z} -  s \, Z)^{-1}$, where $w$ and $s$ are real functions of $v$. For $v>v_c=2.6144 \ldots$ all zeroes are complex conjugate pairs.  If  $\phi$ denotes the complex argument of the conjugate pair nearest to the origin then, by residue calculus, the fire fighter needs $\Theta( 1/\phi)$ rounds before the fire is contained. As $v$ decreases towards $v_c$ these two zeroes merge into a real one, so that argument $\phi$ goes to~0. Thus, curve $\mbox{FF}_v$ does not contain the fire if the fighter moves at speed $v=v_c$.
(That speed $v>v_c$ is sufficient for containing the fire has been proposed before by Bressan et al.~\cite{bbfj-bsfcp-08}, who constructed a sequence of logarithmic spiral segments that stay strictly away from the fire.) 

Second, we show that any curve that visits the four coordinate half-axes in cyclic order, and in inreasing distances from the origin, needs speed $v>1.618\ldots$, the golden ratio, in order to contain the fire.

\noindent
{\bf Keywords:}
Motion Planning, Dynamic Environments, Spiralling strategies, Lower and upper bounds
\end{abstract}

\section{Introduction}
Fighting wildfires and epidemics has become a serious issue in the last decades. Professional fire fighters need models and simulation tools
on which strategic decisions can be based; for example see \cite{faoun-ihffp}. Thus, a good understanding of the theoretical foundations seems necessary.

In theoretical computer science, substantial work has been done on the fire fighting problem in graphs; see, e.g., the survey article~\cite{fm-fpsr-09}. Here,
initially one vertex is on fire.  Then an immobile firefighter can be placed at one of the other vertices.
Next, the fire spreads to each adjacent vertex that is not defended by a fighter, and so on.
The game continues until the fire cannot spread anymore.
The objective, to save a maximum number of vertices from the fire, is NP-hard to achieve, even for trees of degree three; see~\cite{fkmr-fpgmd-07}. 
Optimal strategies are known for special graphs, i.e., 
for grid graphs~\cite{wm-fcg-02}.
The problem can also be interpreted as an 
intruder search game. The total extension of the fire in the graph represents 
the current possible location of the intruder. Some algorithms and lower bounds have been given for the problem of finding an
intruder in special graphs;
see\cite{bffs-cima-02,bggk-hmlnc-09,bkns-eosdi-07}.  

A more geometric setting has been studied in~\cite{kll-aagfb-14}. Suppose that inside a simple polygon $P$
a candidate set of pairwise disjoint diagonal barriers has been defined. If a fire starts at some point inside $P$ one wants to build a subset of these
barriers in order to save a maximum area from the fire. But each point on a barrier must be built before the fire
arrives there. This problem is a special case of a hybrid scheduling and coverage problem, for which an 11.65 approximation algorithm exists.

Bressan et al.~\cite{b-dicff-07,b-dbpmf-12,bbfj-bsfcp-08,bdl-eosfc-09,bw-msbhp-09,bw-efnaf-10,bw-gocdb-12,bw-osibp-12} introduced a purely geometric model where a fire spreads in the plane and one or more fire fighters are tasked to block it by building barriers. They cover a  wide range of possible scenarios, including fires whose shapes can change under the influence of wind, and provide a wealth of results, among them upper and lower bounds for the speed the fire fighter(s) need, and existence theorems for solutions that contain the fire and minimize the area burned. 

A very interesting case has been studied in Bressan et al.~\cite{bbfj-bsfcp-08}. A circular fire centered at the origin  spreads at unit speed, and a single fire fighter can build a barrier at speed~$v$. Their construction is based on the following observation. Let $P$ be a point on the logarithmic spiral $S^\alpha =(\varphi, e^{\varphi \cot\alpha})$ of excentricity $\alpha$, and let $Q$ denote the next point on $S^\alpha$ touched by a tangent at $P$; see Figure~\ref{logspiral-fig}, (i). 
\begin{figure}
\begin{center}
\includegraphics[scale=0.4]{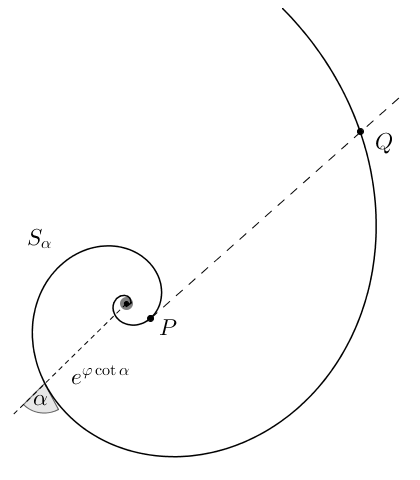}\includegraphics[scale=0.4]{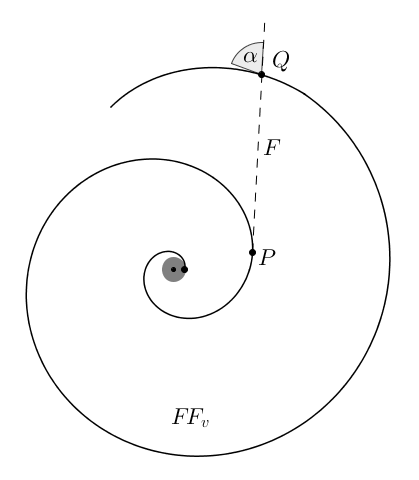}
\caption{(i) A logarithmic spiral $S_\alpha$ of excentricity $\alpha$; angle $\varphi$ ranges from $-\infty$ to $\infty$. (ii)~Curve $\mbox{FF}_v$ results when the fighter moves at speed $v$ along the fire's expanding boundary.}
\label{logspiral-fig}
\end{center}
\end{figure}
Then the spiral's length to $Q$ is at most $2.6144\ldots$ times the sum of its length to $P$ plus the length $|PQ|$ of the tangent, for all values of $\alpha$. In other words, if the fighter builds such a barrier at some speed $v>v_c:=2.6144\ldots$ she will always reach~$Q$ before the fire does, which crawls around the spiral's outside to point $P$ and then runs straight to $Q$. In~\cite{bbfj-bsfcp-08} the fighter uses this leeway to build a sequence of logarithmic spiral segments of increasing excentricities $\alpha_i$ that stay away from the fire, plus one final line segment that closes this barrier curve onto itself.

Logarithmic spiral movements have also been used 
for  shortest paths amidst growing circles~\cite{vo-psspa-06,mnrj-oacsp-07} and in the 
context of search games~\cite{l-ooss-10}.

In this paper we study the rather natural approach where the fire fighter keeps building a barrier right along the boundary of the expanding fire, at constant speed $v$. Let $\mbox{FF}_v$ denote the resulting barrier curve. At each point $Q$ both fighter and fire arrive simultaneously, by definition. As we shall see below, tangents form a constant angle $\alpha = \cos^{-1}(1/v)$ with the curve; see Figure~\ref{logspiral-fig}, (ii).

While the fighter keeps building $\mbox{FF}_v$, the fire is coming after her along the outside of the barrier, as shown in Figure~\ref{fire-fig}. 
Intuitively, the fighter can only win this race, and contain the fire, if the last coil of the barrier hits the previous coil.
\begin{figure}
\begin{center}
\includegraphics[scale=0.4]{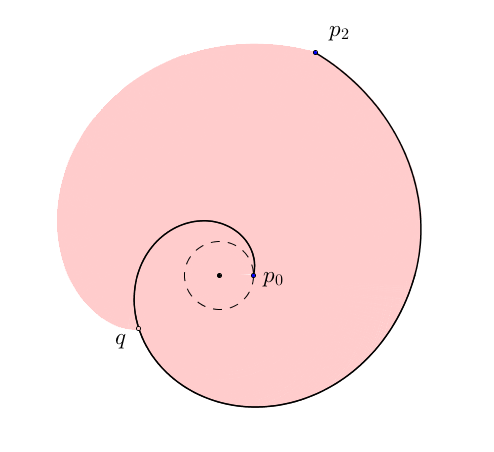}
\caption{The race between the fire and the fighter.
When the fighter arrives at point $p_2$, having constructed a barrier from $p_0$ to $p_2$, the fire has expanded along the outer side of the barrier up to point $q$.
}
\label{fire-fig}
\end{center}
\end{figure}

This is equivalent to saying that the length, $F$, of the tangent from the fighter's current position $Q$ back to point $P$  
becomes zero.

First, we deduce two structural properties of barrier curve $\mbox{FF}_v$. They lead to a recursive system of linear differential equations that allow us to describe the values of $F$ in the $i$-th round. It turns out that we need only check the signs of the values $F_i$ at the end of round~$i$, in order to see if the fighter is successful. 
Therefore, we look at the generating function $F(Z) = \sum_{i=0}^\infty {F_i Z^i}$ and obtain, from the recursions, the equation
\begin{eqnarray}
    \frac{F(Z)}{F_0} \ = \  \frac{ e^{v Z} \ - \ r \, Z}{ e^{w Z} \ - \ s \, Z}   \label{equa}
\end{eqnarray}
where $v, r, w, s$ are real functions of speed $v$. Singularities can only arise from zeroes of the denominator that do not cancel out with the numerator. It turns out that for $v > v_c = 2.6144\ldots$ only conjugate pairs of complex zeroes of the denominator exist~\cite{f-ansdd-95}. A theorem of Pringsheim's directly implies that not all coefficients $F_i$ of power series $F(Z)$ can be positive, showing that barrier curve $\mbox{FF}_v$ does close on itself at some time.

To find out after how many rounds this happens we look at the conjugate pair of zeroes of smallest modulus and let $\phi_v$ denote their (positive) argument. Residue analysis shows that  $\Theta(1/\phi_v)$ rounds are necessary before the fire is contained. As speed $v$ decreases towards $v_c$, the two conjugate zeroes merge into a real zero. Therefore, argument $\phi_v$ tends to~0, proving that the fire fighter cannot succeed at speed $v=v_c$.

\vspace{\baselineskip}
In addition to the results on curve $\mbox{FF}_v$, we obtain the following lower bound. Let us call a curve ``spiralling'' if it visits the four coordinate half-axes in cyclic order, and at increasing distances from the origin.
(Note that curve  $\mbox{FF}_v$ is spiralling even though the fighter's distance to~the origin may be decreasing: the barrier's intersection points with any ray from~$0$ are of increasing order since the curve does
not self-intersect.)

We prove that a fire fighter who follows such a spiralling curve can only be successful if her speed exceeds $\frac{1+\sqrt{5}}{2} \, \approx \, 1.618$, the golden ratio.

\section{Acknowledgement}
A preliminary version of part of this paper has appeared at SoCG'15~\cite{kll-ffp-15}. We thank all anonymous referees for their valuable suggestions and, in particular, for pointing out to us the work by Bressan et al.~\cite{b-dicff-07,b-dbpmf-12,bbfj-bsfcp-08,bdl-eosfc-09,bw-msbhp-09,bw-efnaf-10,bw-gocdb-12,bw-osibp-12}.

\section{The barrier curve $\mbox{FF}_v$} \label{FF-sec}
\subsection{The first rounds}

Let $p$ be a point on the barrier curve's first round, as depicted in Figure~\ref{Spirals-fig}. 
If $\alpha$ denotes the angle between the fighter's velocity vector at $p$ and the ray from~0 through $p$,  the fighter moves at speed $v \cos\alpha$ away from~0. This implies $v \cos\alpha =1$, because the fire expands at unit speed and the fighter stays on its frontier, by definition. Since the fighter is operating at constant speed $v$, angle $\alpha$ is constant, and given by $\alpha = \cos^{-1}(1/v)$.

Consequently, the first part of the barrier curve, between points $p_0$ and $p_1$ shown in Figure~\ref{Spirals-fig}, (i), is part of a logarithmic spiral of excentricity $\alpha$ centered at~0.
\begin{figure}
\begin{center}
\includegraphics[scale=0.29]{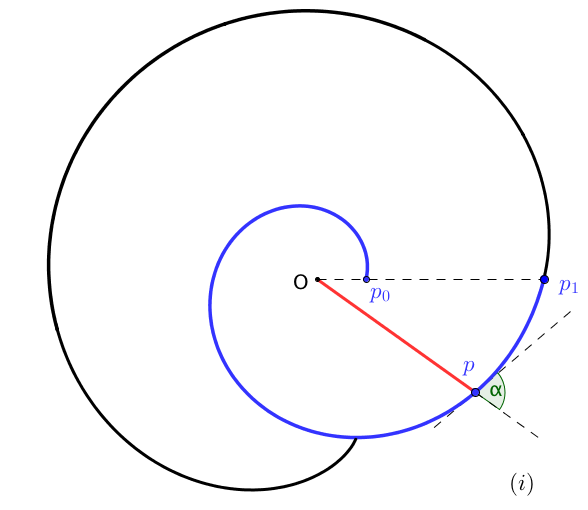}\includegraphics[scale=0.29]{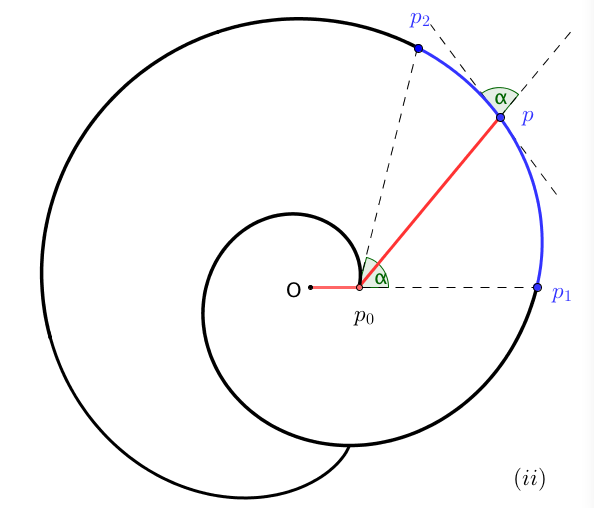}
\caption{The barrier curve starts with two parts of logarithmic spirals of excentricity $\alpha$, centered at 0 and $p_0$, respectively.}
\label{Spirals-fig}
\end{center}
\end{figure}
In polar coordinates, this segment can be desribed by $(\varphi, A\cdot e^{\varphi\cot\alpha})$, where $\varphi\in[0,2\pi]$, and $A$ denotes the distance from the origin to $p_0$, i.e., the fire's intitial radius.  

In general, the curve length of a logarithmic spiral of excentricity $\alpha$ between two points at distance $d_1<d_2$ to its center is known to be $\frac{1}{\cos\alpha}\left(d_2-d_1\right)$. Thus, we have for the length $l_1$ of the barrier curve from $p_0$ to $p_1$ the equation
\begin{eqnarray}
  l_1 \, = \, \frac{A}{\cos\alpha} \cdot (e^{2\pi \cot\alpha} - 1).    \label{l1-eq}
\end{eqnarray}

From point $p_1$ on, the geodesic shortest path, along which the fire spreads from 0 to the fighter's current position, $p$, is no longer straight. It starts with segment $0 p_0$, followed by segment $p_0 p$, until, for $p=p_2$, segment $p_0p$ becomes tangent to the barrier curve at $p_0$; see Figure~\ref{Spirals-fig}, (ii). 
By the same arguments as above, between $p_1$ and $p_2$  barrier curve $\mbox{FF}_v$ is also part of a logarithmic spiral of excentricity $\alpha$, but now centered at~$p_0$. 
This spiral segment starts at $p_1$ at distance 
 $A'=A(e^{2\pi \cot\alpha}-1)$ from its center $p_0$.
Since $p_2$ and $p_1$ form an angle $\alpha$ at $p_0$, the distance from $p_2$ to $p_0$ equals $A' e^{\alpha \cot\alpha}$. Thus, the curve length  from $p_1$ to $p_2$ is given by
 $l_2'=\frac{A'}{\cos\alpha} (e^{\alpha \cot\alpha}-1)=
 \frac{A}{\cos\alpha}  (e^{2\pi \cot\alpha} -1) (e^{\alpha \cot\alpha}-1)$.
 Consequently,  the overall 
curve length $l_2$ from $p_0$ to $p_2$ equals
\begin{eqnarray}
l_2 \, = \, l_1 + l'_2 \, = \, \frac{A}{\cos\alpha}  (e^{2\pi \cot\alpha} -1) e^{\alpha \cot\alpha}.   \label{l2-eq}
\end{eqnarray}%
\begin{figure}
\begin{center}
\includegraphics[scale=0.35]{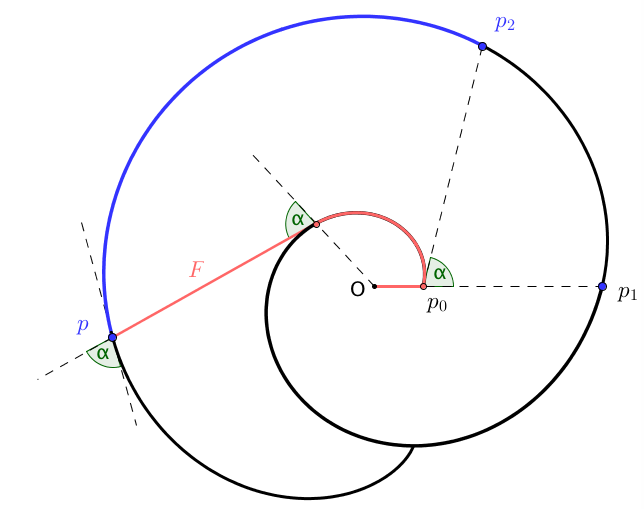}
\caption{From point $p_2$ on the barrier curve results from wrapping 
around the barrier already constructed. The last segment, \emph{free string} $F$, of the 
shortest path from the fire source to the current barrier point $p$ shrinks, 
by wrapping, and simultaneously grows by $\cos\alpha$. 
The fighter will be successful if, and only if,
$F$ ever shrinks to zero.}
\label{String-fig}
\end{center}
\end{figure}%
From point $p_2$ on, the geodesic shortest path from~0 to the fighter's current position, $p$, starts wrapping around the existing spiral part of the curve, beginning at $p_0$; see~Figure~\ref{String-fig}. 
The last segment of this path is tangent to the previous round of the curve. As mentioned in the Introduction, we shall endeavor to determine its length, $F$, because the fire will be contained if and only if $F$ ever attains the value~0.

One could think of this tangent as a string (named the {\em free string}) at whose endpoint,~$p$, a pencil is attached that draws the barrier curve. But unlike an involute, here the string is not normal to the outer layer. Rather, its extension beyond $p$ forms an angle~$\alpha$ with the barrier's tangent at $p$. This causes the string to grow in length by $\cos\alpha$ for each unit drawn. At the same time, the inner part of the string gets wrapped around the previous coil of the barrier. It is this interplay between growing and wrapping we need to analyze.

One can show that after $p_2$ no segment of positive length of $\mbox{FF}_v$ is part of a logarithmic spiral.

\subsection{Structural properties} \label{struct-subsec}

In this subsection we assume that the fighter has built quite a few rounds of the barrier curve without yet containing the fire. 
That the first two rounds of the curve involve two different spiral segments, around $0$ and around $p_0$, influences the subsequent layers. The structure of the curve can be described as follows.
Let $l_1$ and $l_2$ denote the curve lengths from $p_0$ to $p_1$ and $p_2$, respectively, as in Equations~\ref{l1-eq} and~\ref{l2-eq}. For $l \in [0,l_1]$ let $F_0(l)$ denote the segment connecting $0$ to the point of curve length $l$; see the sketch given in Figure~\ref{linkage-fig}. 
\begin{figure}
\begin{center}
\includegraphics[scale=0.5]{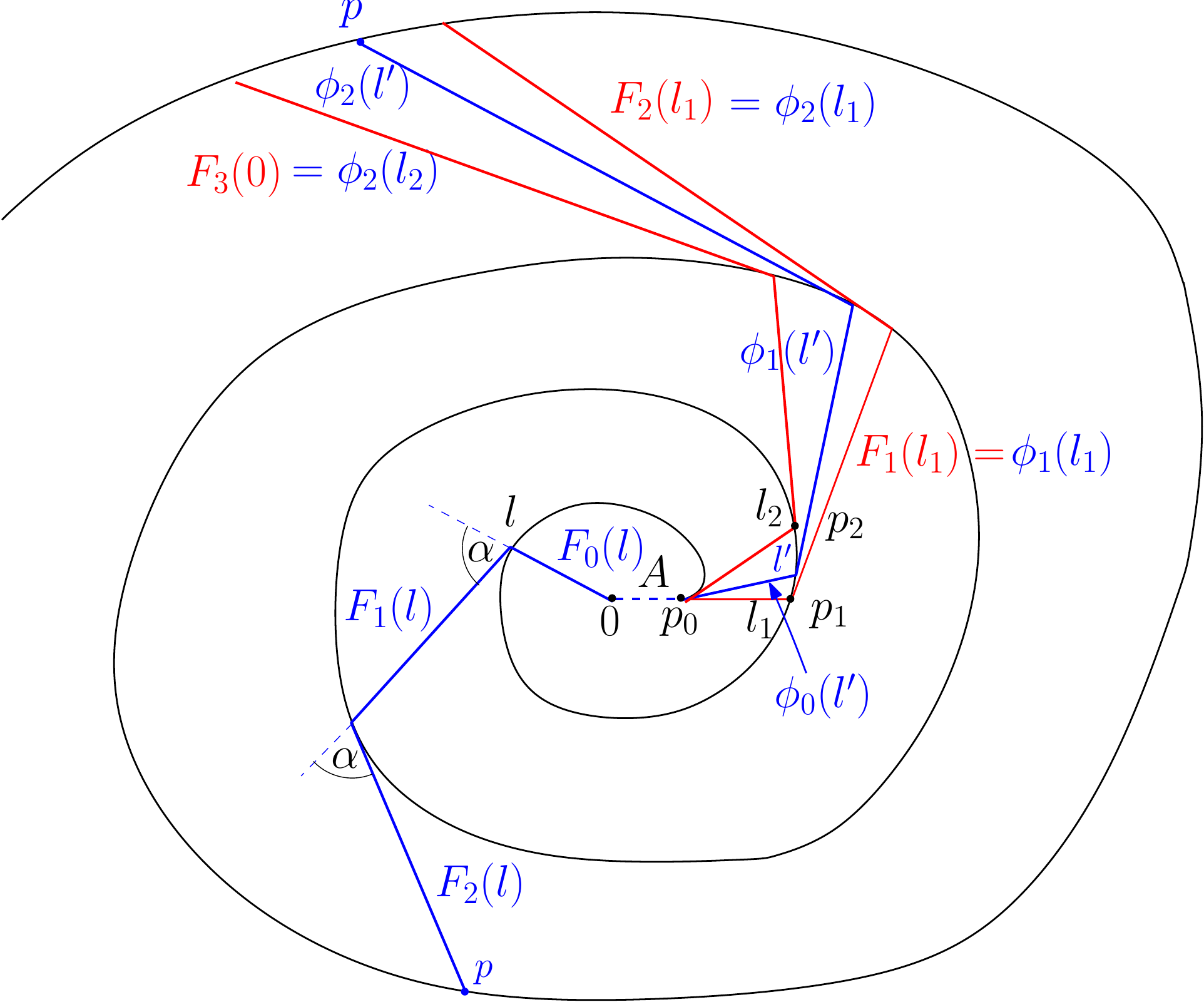}
\caption{Repeatedly constructing backwards tangents may end in 0 or in $p_0$. This way, two types of linkages are defined.}
\label{linkage-fig}
\end{center}
\end{figure}
At the endpoint of $F_0(l)$ we construct the tangent and extend it until it hits the next layer of the curve, creating a segment $F_1(l)$, and so on. This construction gives rise to a ``linkage''  connecting adjacent layers of the curve. Each edge of the linkage is turned counterclockwise by $\alpha$ with respect to its predecessor. The outermost edge of a linkage is the free string mentioned above.
As parameter $l$ increases from~$0$ to $l_1$, edge $F_0(l)$, and the whole linkage, rotate counterclockwise. While $F_0(0)$ equals the line segment from the origin to $p_0$, edge $F_0(l_1)$ equals segment $0p_1$.

Analogously, let $l' \in [l_1, l_2]$, and let $\phi_0(l')$ denote the segment from $p_0$ to the point
at curve length $l'$ from $p_0$. This segment can be extended into a linkage in the same way. We observe that 
\begin{eqnarray}
F_{j+1}(l_1) &=& \phi_{j+1}(l_1)  \label{l1}  \\
F_{j+1}(0) &= &\phi_{j}(l_2)    \label{l2} 
\end{eqnarray}
hold (but initially, we have $F_0(l) = A + \cos(\alpha) \, l$ and $\phi_0(l') = \cos(\alpha) \, l'$, so that
$F_0(l_1)\not= \phi_0 ( l_1)$).
Clearly, each point on the curve can be reached by a unique linkage, as tangents can be constructed backwards. We refer to the two types of linkages by $F$-type and $\phi$-type. As Figure~\ref{linkage-fig} illustrates, points of the same linkage type form alternating intervals along the barrier curve. If $p$'s linkage is of $F$-type then $p$ is uniquely determined by the index $j \geq 0$ and parameter $l \in [0,l_1] $ such that $p$ is the outer endpoint of edge $F_j(l)$.

\vspace{\baselineskip}
Now we will derive two structural properties of $F$-linkages on which our analysis will be based; analogous facts hold for $\phi$-linkages, too.
To this end, let $L_j(l)$ denote the length of the barrier curve from $p_0$ to the outer endpoint of edge $F_j(l)$, and let $F_j(l)$ also denote the length of edge $F_j(l)$.
\begin{lemma}     \label{grow-lem}
We have $L_{j-1}(l) + F_j(l) = \cos\alpha \, L_j(l)$.
\end{lemma}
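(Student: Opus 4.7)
The plan is to apply directly the defining property of $\mbox{FF}_v$: the fighter and the fire arrive at each point of the barrier at the same moment. Let $p$ denote the outer endpoint of $F_j(l)$, which lies on the $(j{+}1)$-st coil, and let $q$ denote the outer endpoint of $F_{j-1}(l)$, which lies on the $j$-th coil. By the linkage's recursive construction $q$ is the start of $F_j(l)$, the segment $qp$ is tangent to coil $j$ at $q$, and $|qp|=F_j(l)$, while $p$ and $q$ lie at barrier-arc distances $L_j(l)$ and $L_{j-1}(l)$ from $p_0$, respectively. Since the fighter moves at speed $v$ from $p_0$, she reaches $p$ at time $L_j(l)/v=\cos\alpha\,L_j(l)$, using $v\cos\alpha=1$.

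The heart of the argument is to show the fire also needs exactly time $L_{j-1}(l)+F_j(l)$ to reach $p$, after which equating the two arrival times gives the identity. At time $0$ the fire's initial frontier of radius $A$ passes through $p_0$, and because the fighter occupies the outer endpoint of the growing barrier at every moment, the only route by which the fire can cross to the outer side of the barrier is to slip around the inner endpoint $p_0$. Once on the outer side, the wavefront propagates at unit speed, and its shortest route to $p$ hugs the outside of the barrier from $p_0$ up to some tangent point and then departs tangentially, straight to $p$. Since a smooth curve and its infinitesimal outer parallel have equal length, the hugged portion contributes precisely the barrier's own arc length from $p_0$ to that tangent point; the linkage's recursion then pins the tangent point to $q$, because the extension of $F_{j-1}(l)$ beyond coil $j$ is tangent to coil $j$ precisely at its endpoint $q$. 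So the fire's total travel time equals the hugged arc length $L_{j-1}(l)$ plus the tangent segment $F_j(l)$.

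The main obstacle is the geodesic argument in the second paragraph: one must confirm that the shortest outside path really does hug the barrier up to $q$, rather than taking a tangent shortcut across one of the inner annuli. A fully self-contained alternative avoids this by a direct ODE derivation. Setting $\hat u=(p-q)/F_j(l)$, the tangent $\hat{t}_p$ to coil $j{+}1$ at $p$ satisfies $\hat{t}_p\cdot\hat u=\cos\alpha$ (this is the stated angle between the free string and the barrier's tangent at $p$), while the tangent to coil $j$ at $q$ equals $\hat u$ by tangency of the string. Differentiating $F_j^2(l)=|p-q|^2$ with $dp/dl=L_j'(l)\hat{t}_p$ and $dq/dl=L_{j-1}'(l)\hat u$ yields $F_j'(l)=\cos\alpha\,L_j'(l)-L_{j-1}'(l)$, so $F_j(l)+L_{j-1}(l)-\cos\alpha\,L_j(l)$ is constant in $l$. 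It then suffices to check the constant vanishes at one convenient value; the base case $j=1$, $l=0$ is immediate, since the tangent from $p_0$ to coil~$2$ meets coil~$2$ at $p_2$, giving $F_1(0)=|p_0p_2|=A'e^{\alpha\cot\alpha}=\cos\alpha\,l_2=\cos\alpha\,L_1(0)$ by equation~(\ref{l2-eq}), and higher $j$ follow by induction via the linkage-type boundary relations between $F_j$ and $\phi_{j-1}$.
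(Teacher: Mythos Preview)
Your first two paragraphs give exactly the paper's proof---simultaneous arrival of fire and fighter at~$p$, with the fire's travel time equal to the geodesic length $L_{j-1}(l)+F_j(l)$---spelled out in more detail than the paper's two sentences.

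Your third paragraph is a genuinely different route. Instead of appealing to the global geodesic length, you differentiate $|p-q|^2$ using only the two local facts $\hat t_q=\hat u$ (tangency of the free string at~$q$) and $\hat t_p\cdot\hat u=\cos\alpha$ (the stated angle at~$p$) to obtain $F_j'(l)=\cos\alpha\,L_j'(l)-L_{j-1}'(l)$ directly. The paper arrives at this same identity only later, as equation~(\ref{deriv}), \emph{by differentiating} Lemma~\ref{grow-lem}; you are running that step in reverse and then integrating. What your approach buys is that the angle condition is the immediate local expression of ``fighter on the fire front,'' so no claim about the full shape of the geodesic is needed. The price is the integration constant: your check at $j=1$, $l=0$ is correct, and the extension to higher~$j$ does go through via a mutual induction between the $F$- and $\phi$-linkages using conditions~(\ref{l1}) and~(\ref{l2}) together with the analogous $\phi$-identity, but you have left that induction implicit.

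One remark on your stated worry: since the barrier is a single non-self-intersecting curve that always curves to the left, a taut string on its outer side from $p_0$ to~$p$ necessarily hugs the curve all the way to the final tangent point; the spiral ``annuli'' form a single connected corridor, so there is no shortcut to take. The geodesic concern is therefore not an actual gap in the paper's argument, though it is reasonable to flag it.
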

\begin{proof}
Both, fire and fire fighter, reach the endpoint of $F_j(l)$ at the same time. The fire has travelled a geodesic distance of $L_{j-1}(l) + F_j(l)$ at unit speed, the fighter a distance of $L_j(l)$ at speed $1/\cos\alpha$.
\end{proof}

The second property is related to the wrapping of the free string. Intuitively, it says that if we turn an $F$-linkage, the speed of each edge's endpoint is proportional to its length.
\begin{lemma}       \label{links-lem}
As functions in $l$, $L_j$ and $F_j$ satisfy the following equation.
\[
    \frac{L'_{j-1}(l)}{L'_{j}(l)} \ = \ \frac{F_{j-1}(l)}{F_{j}(l)}.
\]
\end{lemma}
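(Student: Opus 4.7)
The plan is to exploit the fact that, because consecutive edges of an $F$-linkage differ by the \emph{fixed} angle $\alpha$, the whole linkage rotates (as $l$ varies) with a single common angular velocity $\omega(l)$; once this is identified, resolving the velocity of the outer endpoint of each edge into components along and perpendicular to the edge immediately expresses $L'_j$ as $F_j\,\omega/\sin\alpha$, and the lemma is just the independence of this ratio from~$j$.

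Concretely, I would let $\hat{u}_j(l)$ denote the unit vector in the direction of $F_j(l)$. Because $F_j(l)$ is tangent to the barrier curve at its inner endpoint $P_{j-1}(l)$, we have $\hat{u}_j(l)=\hat{t}(L_{j-1}(l))$, where $\hat{t}$ is the unit tangent field along $\mbox{FF}_v$. Since successive edges are turned by the constant angle $\alpha$, i.e.\ $\hat{u}_{j+1}=R_\alpha\hat{u}_j$ for the planar rotation $R_\alpha$, every $\hat{u}_j$ rotates with the same $l$-angular velocity, call it $\omega(l)$; equivalently, differentiating $\hat{t}(L_j)-\hat{t}(L_{j-1})=\alpha$ in $l$ shows $\kappa(L_j)L'_j=\kappa(L_{j-1})L'_{j-1}$.

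Next I would differentiate the linkage identity $P_j(l)=P_{j-1}(l)+F_j(l)\hat{u}_j(l)$ with respect to $l$, obtaining
\[
\dot P_j(l) \;=\; \dot P_{j-1}(l)\;+\;F'_j(l)\,\hat{u}_j(l)\;+\;F_j(l)\,\omega(l)\,\hat{u}_j^{\perp}(l).
\]
Since $P_{j-1}$ traces the barrier curve with speed $L'_{j-1}$ in the tangent direction $\hat{u}_j$, we have $\dot P_{j-1}=L'_{j-1}\hat{u}_j$; likewise $\dot P_j=L'_j\,\hat{u}_{j+1}=L'_j(\cos\alpha\,\hat{u}_j+\sin\alpha\,\hat{u}_j^\perp)$. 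Matching the $\hat{u}_j^\perp$-components gives $L'_j(l)\sin\alpha=F_j(l)\,\omega(l)$, valid for every~$j$. Because $\omega(l)$ and $\sin\alpha$ do not depend on $j$, the ratio $L'_j(l)/F_j(l)$ is independent of $j$, which rearranges to the claimed equality $L'_{j-1}(l)/L'_j(l)=F_{j-1}(l)/F_j(l)$.

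The only genuinely delicate step is the first one: making the argument that all edges rotate at a common angular velocity airtight, and in particular checking the sign conventions so that ``turned counterclockwise by $\alpha$'' really does imply a single $\omega(l)$ governing the entire linkage (rather than, say, alternating signs on perpendicular components). The companion identity from Lemma~\ref{grow-lem} will conveniently fall out as a sanity check from the parallel-component equation $L'_j\cos\alpha=L'_{j-1}+F'_j$.
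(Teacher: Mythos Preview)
Your argument is correct and takes a genuinely different route from the paper's. The paper works ``from the outside in'': it first proves two auxiliary facts about arbitrary smooth curves---Lemma~\ref{wrap-lem} on how fast a tangent string wraps (giving $dL_{j-1}/dL_j=\sin\alpha\,r/F_j$ with $r$ the osculating radius at the inner endpoint) and Lemma~\ref{turn-lem} on the intersection of turned normals (giving $\sin\alpha\,r=F_{j-1}$ for our specific curve)---and then combines them. You instead work ``from the inside out'': the fixed turning angle $\alpha$ between consecutive edges forces a single angular velocity $\omega(l)$ for the whole linkage, and differentiating $P_j=P_{j-1}+F_j\hat u_j$ and reading off the $\hat u_j^{\perp}$-component yields $L'_j\sin\alpha=F_j\,\omega$ directly, so $L'_j/F_j$ is independent of~$j$. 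Your approach is more economical---it avoids osculating circles and the two technical lemmas entirely---and it delivers the derivative of Lemma~\ref{grow-lem} as a free by-product from the $\hat u_j$-component. The paper's approach, on the other hand, isolates two facts of independent geometric interest (the wrapping rate and the turned-normals limit) that may be useful elsewhere; your argument is tailored to the rigid-linkage structure of $\mbox{FF}_v$. The delicate point you flag---that the common angular velocity really holds---is easily made rigorous: writing $\theta(s)$ for the turning angle of $\mbox{FF}_v$, the relation $\hat u_{j+1}=R_\alpha\hat u_j$ reads $\theta(L_j(l))-\theta(L_{j-1}(l))=\alpha$, and differentiating in $l$ gives $\kappa(L_j)L'_j=\kappa(L_{j-1})L'_{j-1}$, which is exactly $\omega_{j+1}=\omega_j$.
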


We will derive Lemma~\ref{links-lem} from two general facts on smooth curves stated in Lemma~\ref{wrap-lem} and Lemma~\ref{turn-lem}. 
\begin{lemma}   \label{wrap-lem}
Suppose a string of length $F$ is tangent to a point $t$ on some smooth curve $C$. Now the end of the string moves a distance of $\epsilon$ in the direction of $\alpha$, as shown in Figure~\ref{wrapApp-fig}. Then for the curve length ${C^{t_\epsilon}_t}$ between $t$ and the new tangent point, $t_\epsilon$, we have
\[
     \lim_{\epsilon \to 0}  \ \frac{{C^{t_\epsilon}_t}}{\epsilon} \ = \ \frac{\sin\alpha \ r}{F}
\]
where $r$ denotes the radius of the osculating circle at $t$.
\end{lemma}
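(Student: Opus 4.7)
The plan is to reduce the computation to the second-order behavior of $C$ near $t$, since the limit as $\epsilon \to 0$ depends only on the osculating circle. I would set up local coordinates with $t$ at the origin and the tangent to $C$ at $t$ along the $x$-axis, so that $C$ admits the local expansion $y = \frac{x^2}{2r} + O(x^3)$ with $r$ the radius of the osculating circle at $t$, and place the string along the $x$-axis with $p = (-F, 0)$. The prescribed motion then reads $p_\epsilon = p + \epsilon(\cos\alpha, \sin\alpha)$, where $\alpha$ is the angle of the displacement measured from the string's direction at $p$.

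Next I parameterize the new tangent point as $t_\epsilon = \bigl(a, \tfrac{a^2}{2r}\bigr) + O(a^3)$ for a small unknown $a$, noting that the arc length from $t$ to $t_\epsilon$ equals $|a| + O(a^3)$, so it suffices to determine $a$ to first order in $\epsilon$. The tangency condition is that the chord from $p_\epsilon$ to $t_\epsilon$ be parallel to the tangent vector $(1, a/r) + O(a^2)$ of $C$ at $t_\epsilon$. Writing this as equality of slopes, expanding, and discarding higher-order terms in $\epsilon$ and $a$ produces a linear equation that solves to $a = \pm\,\epsilon\, r\,\sin\alpha/F$, so $|a|/\epsilon \to r\sin\alpha / F$ as desired.

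The geometric content of this computation, which one can present as an alternative proof, is the identification of two infinitesimal angular rotations. The tangent direction of $C$ rotates by exactly $s/r$ as one moves along $C$ by arc length $s$, by the definition of curvature at $t$. On the other hand, the chord from $p_\epsilon$ to $t_\epsilon$ rotates, to first order, by $\epsilon\sin\alpha/F$, because only the component of the displacement perpendicular to the string contributes, and perpendicularly displacing the endpoint of a segment of length $F$ by $\epsilon\sin\alpha$ pivots that segment through angle $\epsilon\sin\alpha/F$. For the chord to remain tangent to $C$ at $t_\epsilon$, these two rotations must agree, giving $s/r = \epsilon\sin\alpha/F$ and hence the claimed limit after dividing by $\epsilon$.

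The main obstacle is bookkeeping around sign conventions: the sign of $\alpha$, the side of $C$ on which the string lies, and the direction in which the wrapping advances the tangent point must all be matched to the figure of the lemma so that $s = |a|$ is positive and the formula carries the stated sign. A subsidiary technical point is justifying that the higher-order terms in $\epsilon$ and $a$ genuinely vanish in the limit; this is routine via the implicit function theorem applied to the smooth tangency equation, whose derivative with respect to $a$ at $(a,\epsilon) = (0,0)$ is proportional to $F/r \neq 0$, so $a$ is a smooth function of $\epsilon$ near $0$ with linear coefficient computed as above.
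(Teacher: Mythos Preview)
Your argument is correct and proceeds along a genuinely different line from the paper's. The paper works synthetically with the osculating circle: it introduces the central angle $\phi$ subtended by the arc from $t$ to $t_\epsilon$, uses elementary triangle geometry and the law of sines to relate $\epsilon$, $\phi$, $F$, and $\alpha$, and then applies l'H\^opital's rule twice to evaluate the limit. Your approach instead linearizes analytically, writing $C$ in local coordinates as $y = x^2/(2r) + O(x^3)$ and solving the tangency condition as a perturbation problem; the implicit function theorem cleanly disposes of the higher-order terms. Your second, ``angular rotation'' argument is essentially the infinitesimal form of the paper's computation---the paper's central angle $\phi$ is your $s/r$, and its law-of-sines step is your observation that the perpendicular component $\epsilon\sin\alpha$ pivots the chord by $\epsilon\sin\alpha/F$---but you reach the conclusion in one line rather than through a chain of trigonometric limits. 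The paper's route is more pictorial and ties directly to its figure; yours is more mechanical and would generalize more readily (for instance, to higher-order curvature corrections), at the cost of requiring the reader to trust the bookkeeping you flag at the end.
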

This fact is quite intuitive. The more perpendicular the motion of the string's endpoint, and the larger the radius of curvature, the more of the string gets wrapped. But if the string is very long, the effect of the motion decreases.
\begin{figure}
\begin{center}
\includegraphics[scale=0.5]{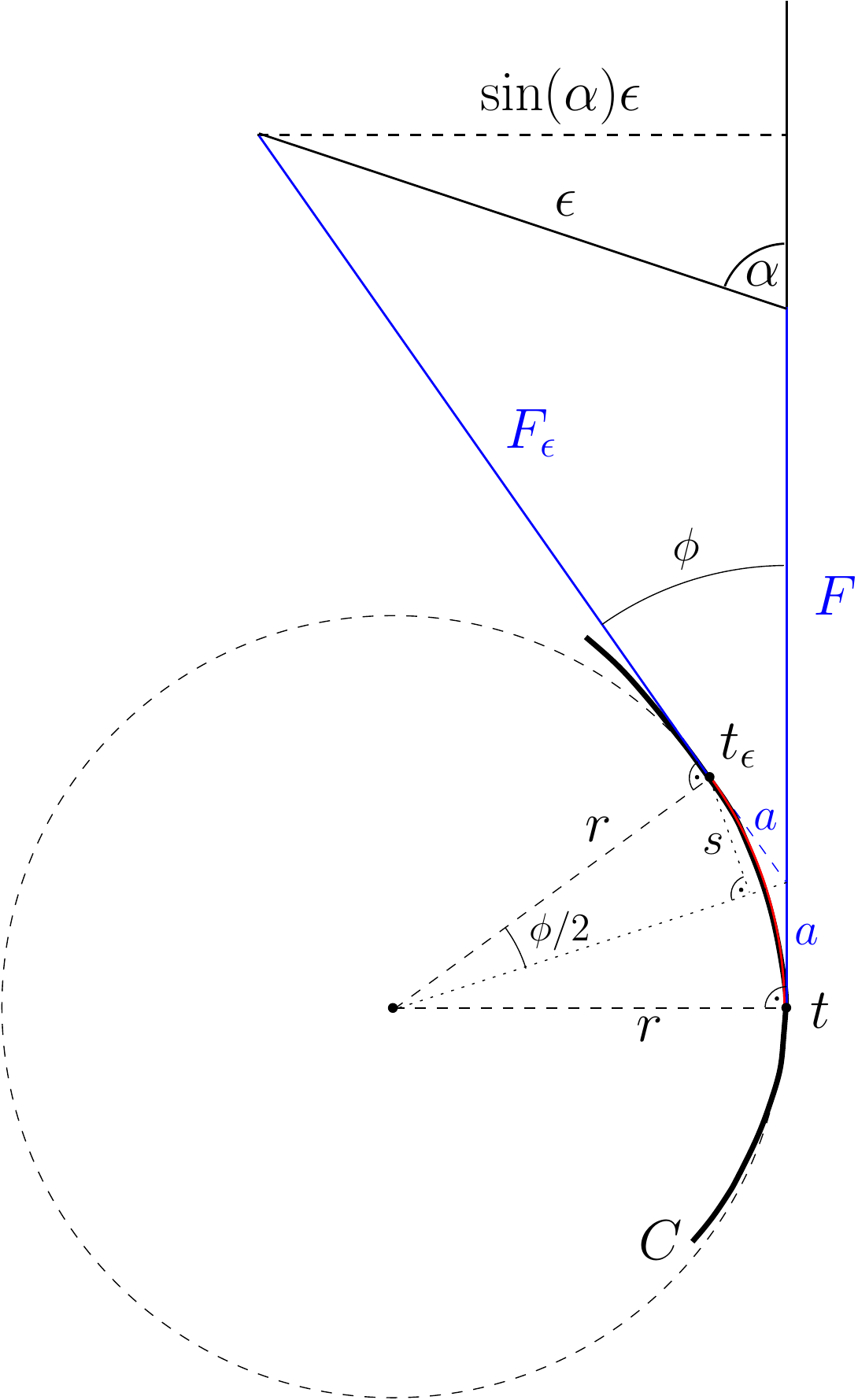}
\caption{A string wrapping around a curve.}
\label{wrapApp-fig}
\end{center}
\end{figure}

The center of the osculating circle at $t$ is known to be the limit of the intersections of the normals of all points near $t$ with the normal at $t$. Lemma~\ref{turn-lem} shows what happens if, instead of the normals, we consider the lines turned by the angle $\pi/2-\alpha$.
\begin{lemma}     \label{turn-lem}
Let $t$ be a point on a smooth curve $C$, whose osculating circle at $t$ is of radius~$r$. Consider the lines $L_{s}$ resulting from turning the normal at points $s$ by an angle of $\pi/2-\alpha$. Then their limit intersection point with $L_{t}$ has distance $\sin\alpha \, r$ to $t$.
\end{lemma}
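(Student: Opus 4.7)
The plan is to reduce the statement to the case of the osculating circle itself, for which the argument becomes a purely geometric computation. Since the claim concerns only the \emph{limit} of the intersection $L_s\cap L_t$ as $s\to t$, and since this limit is precisely the point at which $L_t$ touches the envelope of the one-parameter family $\{L_s\}$, the answer depends only on the first two derivatives of the curve parameterization at~$t$, i.e.\ on the tangent direction and on the curvature $1/r$. Two smooth curves sharing the same tangent and osculating circle at~$t$ therefore produce families with the same envelope contact on~$L_t$, and we may replace $C$ by its osculating circle.

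For a circle of radius $r$, place the centre at the origin, so $t$ lies at distance~$r$ from it. Each circle normal passes through the origin, and rotating it about its base point by $\pi/2-\alpha$ produces a line that makes constant angle $\alpha$ with the tangent to the circle. A short perpendicular-distance calculation then shows that the distance from the origin to every such line equals $r\cos\alpha$, independent of the base point. Thus the entire family $\{L_s\}$ is tangent to the concentric circle of radius $r\cos\alpha$, which is consequently its envelope. Rotational symmetry about the origin also forces this conclusion immediately.

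Since $L_t$ is tangent to a circle of radius $r\cos\alpha$ centred at a point that is at distance~$r$ from~$t$, the Pythagorean theorem yields the length of the tangent segment from~$t$ to the point of tangency as $\sqrt{r^2-(r\cos\alpha)^2}=r\sin\alpha$. That point of tangency is exactly $\lim_{s\to t}L_s\cap L_t$, so the limit point lies at distance $r\sin\alpha$ from~$t$, as claimed.

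The delicate step, and I expect the only real obstacle, is the reduction to the osculating circle. The cleanest way to make it rigorous is to write $L_s$ as the line through $c(s)$ in direction $d(s)=\cos\alpha\,T(s)+\sin\alpha\,N(s)$, where $T,N$ are the Frenet frame, and to apply the standard envelope formula $\mu(s)=-\det(c'(s),d(s))/\det(d'(s),d(s))$. Using the Frenet relations $T'=\kappa N$ and $N'=-\kappa T$, both determinants become elementary expressions in $\kappa$, and one reads off $\mu(0)=\sin\alpha/\kappa(0)=r\sin\alpha$ without ever needing higher-order derivatives of~$C$, confirming that only the osculating circle at~$t$ matters.
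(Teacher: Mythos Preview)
Your proof is correct and takes a genuinely different route from the paper's. The paper proceeds by brute-force coordinates: it parameterizes $C$ locally as a graph $Y=f(X)$, writes down the equation of the line $L_s$ as the tangent rotated by~$\alpha$ using the tangent addition formula, solves for the intersection of $L_t$ and $L_{t+\epsilon}$, lets $\epsilon\to 0$, and then simplifies the resulting expression until the curvature formula $r=|(1+(f')^2)^{3/2}/f''|$ and the factor $\sin\alpha$ emerge. Your argument instead exploits the envelope interpretation: the limit intersection is the contact point of $L_t$ with the envelope of the family~$\{L_s\}$, and either by reduction to the osculating circle (where symmetry makes the envelope the concentric circle of radius $r\cos\alpha$, and Pythagoras finishes) or directly via the Frenet-frame envelope formula $\mu=-\det(c',d)/\det(d',d)$, you read off $\mu=r\sin\alpha$ in one line. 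Your approach is shorter, coordinate-free, and makes transparent why only the curvature at~$t$ enters; the paper's computation is more elementary in the sense of requiring no envelope theory or Frenet apparatus, but is considerably heavier algebraically.
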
 
A simple example is shown in Figure~\ref{wrap-fig} for the case where curve $C$ itself is a circle. Now we can prove Lemma~\ref{links-lem}.
\begin{figure}
\begin{center}
\includegraphics[scale=0.65]{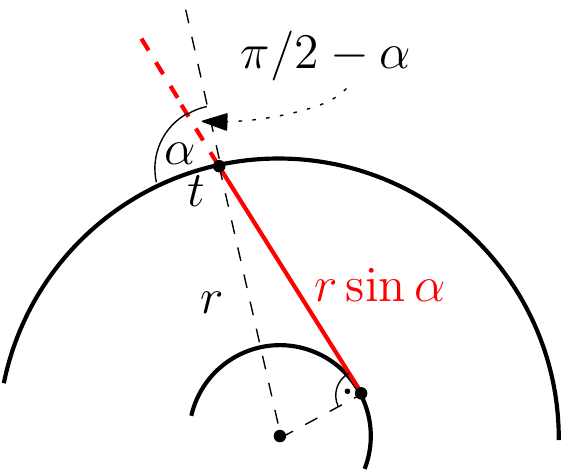}
\caption{Intersection of turned normals.}
\label{wrap-fig}
\end{center}
\end{figure}
\begin{proof}[Proof of Lemma~\ref{links-lem}]
By Lemma~\ref{wrap-lem}, applied to the innermost point $t$ of edge $F_j(L_j)$,
we have
\[
 \frac{L'_{j-1}(L_j)}{L'_{j}(L_j)} \, =  \, L'_{j-1}(L_j)  \, = \, \frac{\sin\alpha \ r}{F_j(L_j)}.
\]
Lemma~\ref{turn-lem} implies that $\sin\alpha \ r$ equals the distance between $t$ and the limit intersection point of the normals turned by $\pi/2 - \alpha$ near $t$. But for the barrier curve~$\mbox{FF}_v$, these turned normals are the tangents to the previous coil, so that $\sin\alpha \ r = F_{j-1}(L_j)$ holds. As we substitute variable $L_j$ with $L_j(l)$, the derivatives of the inner functions cancel out and we obtain Lemma~\ref{links-lem}.
\end{proof}

\begin{proof}[Proof of Lemma~\ref{wrap-lem}] 
Using the notations in Figure~\ref{wrapApp-fig}, the following hold.
From $r \, \sin(\phi/2) = s = a \, \cos(\phi/2)$ we obtain $a = r \, \tan(\phi/2)$. 
For short, let $c:=C^{t_\epsilon}_t.$
By l'Hospi\-tal's rule,
\[
    \frac{c}{2a} \ = \ \frac{r \, \phi}{2r \, \tan(\phi/2)} \ \approx \ \cos^2(\phi/2) \ \to \ 1
\]
as $\epsilon$, hence $\phi$, go to~$0$. Thus, $2a$ is a good approximation of $c={C^{t_\epsilon}_t}$. By the law of sines,
\[
      \frac{\epsilon \, \sin(\alpha)}{\sin(\phi)} \ = \ \frac{F_\epsilon + a}{\sin(\pi/2)},
\]
hence
\[
     \frac{\sin(\phi)}{\epsilon} \ = \ \frac{\sin(\alpha)}{F_\epsilon + a} \ \to \ \frac{\sin(\alpha)}{F}
\]
This implies $\sin(\phi/2)/\epsilon \ \to \ \sin(\alpha)/(2F)$,
and we conclude
\begin{eqnarray*}
\frac{C^{t_\epsilon}_t}{\epsilon} \ &=& \ \frac{c}{2a} \, \frac{2a}{\epsilon}  \ \approx \ \frac{2r \, \tan(\phi/2)}{\epsilon} 
=  \frac{2r \, \sin(\phi/2)}{\epsilon \, \cos(\phi/2)} \ \to \ \frac{r \sin(\alpha)}{F}.
\end{eqnarray*}~\end{proof}
\begin{proof}[Proof of Lemma~\ref{turn-lem}] 
Let us assume that $C$ is locally parameterized by $Y=f(X)$ and that $t=(x_0,f(x_0))$. Then the tangent in $t$ is
\[
Y \ = \ f'(x_0)  X \ - \ f'(x_0)  x_0 \ + \ f(x_0),
\]
and line $L_t$, the tangent turned counterclockwise by $\alpha$, is given by
\[
  Y \ = \ \tan(\arctan(f'(x_0)) + \alpha)  X \ - \   \tan(\arctan(f'(x_0)) + \alpha)  x_0 \ + \ f(x_0).
\]
Now let $(v,w)$ denote the point of intersection of $L_t$ and $L_s$, where $s=(x_0+\epsilon, f(x_0 + \epsilon))$.
Equating the two line equations we obtain
\[
   \big( h(x_0 +\epsilon) - h(x_0) \big) \, v \ = \ g(x_0 +\epsilon) - g(x_0) + f(x_0) - f(x_0+\epsilon)
\]
where
\[
   h(x) \, := \, \tan(\arctan(f'(x)) + \alpha) \ \mbox{ and } \ g(x) \, := \, h(x) x
\]
After dividing by $\epsilon$ and taking limits, we have
\[
    h'(x_0) \, v_0 = g'(x_0) - f'(x_0) = h'(x_0) \, x_0 + h(x_0) - f'(x_0),
\]
which results in
\begin{eqnarray*}
v_0 \ &=& \ x_0 \, + \, \frac{h(x_0)-f'(x_0)}{h'(x_0)}   \\
w_0 \ &=& \ h(x_0) \, v_0 -g(x_0) + f(x_0) \\
          &=& f(x_0) \, + \, \frac{h^2(x_0) - h(x_0) f'(x_0)}{h'(x_0)}.
\end{eqnarray*}
In other words,
\begin{eqnarray*}
(v_0,w_0) - (x_0, f(x_0)) \ &=& \ \frac{h(x_0) - f'(x_0)}{h'(x_0)} \, (1, h(x_0))   \\
|(v_0,w_0) - (x_0, f(x_0))| \ &=& \ | \frac{h(x_0) - f'(x_0)}{h'(x_0)} | \, \sqrt{1+h^2(x_0)}
\end{eqnarray*}
Using the addition formula for $\tan$,
\[
   h(x) \ = \ \tan(\arctan(f'(x)) + \alpha) \ = \ \frac{f'(x) + \tan(\alpha)}{1- f'(x) \tan(\alpha)},
\]
we obtain
\[
     h(x_0) - f'(x_0) \ = \ \frac{1 + (f'(x_0))^2 + \tan(\alpha)}{1- f'(x_0) \tan(\alpha)}.
\]
and
\[
   1+h^2(x_0) \ = \ \frac{( 1+(f'(x_0))^2) \, (1+\tan^2(\alpha))}{(1- f'(x_0) \tan(\alpha))^2}.
\]
Moreover,
\[
    h'(x_0) \ = \ \frac{f''(x)\, (1+\tan^2(\alpha)) }{(1- f'(x_0) \tan(\alpha))^2}.
\]
Putting expressions together we obtain
\[
   |(v_0,w_0) - (x_0, f(x_0))| \ = \ |\frac{\big( 1+(f'(x_0))^2 \big)^{3/2}}{f''(x_0)}| \ \frac{\tan(\alpha)}{\sqrt{1+\tan^2(\alpha)}}.
\]
The first term is known to be the radius of the osculating circle, $r$, and the second equals $\sin(\alpha)$.
\end{proof}
%

\section{Recursive differential equations}    \label{recdgl-sec}

In this section we turn the structural properties observed in Subsection~\ref{struct-subsec} into differential equations. 
By multiplication, Lemma~\ref{links-lem} generalizes to non-consecutive edges. Thus,
\begin{eqnarray}
   \frac{F_j(l)}{F_0(l)} \ = \ \frac{L'_j(l)}{l'} \ = \ L'_j(l)     \label{quots}
\end{eqnarray}
holds. On the other hand,  taking the derivative of the formula in Lemma~\ref{grow-lem} leads to
\begin{eqnarray}
     F'_j(l)  \ + \ L'_{j-1}(l) \ = \ \cos\alpha \,  L'_j(l)     \label{deriv}.
\end{eqnarray}
We substitute in~\ref{deriv} both $L'_j(l)$ and $L'_{j-1}(l)$ by the expressions we get from~\ref{quots} 
and obtain a linear differential equation for $F_j(l)$,
\begin{eqnarray*}     
   F'_j(l) \ - \ \frac{\cos(\alpha)}{F_0(l)} \, F_j(l) \ = \ - \, \frac{F_{j-1}(l)}{F_0(l)}\,. 
\end{eqnarray*}
The solution of $y'(x) + f(x)y(x) = g(x)$ is 
\[
   y(x)=\exp(-a(x)) \left(\int{g(t)\exp(a(t))} \,  \mathrm{d}t + \kappa \right),
\]
where $a = \int f$ and $\kappa$ denotes a constant that can be chosen arbitrarily. In our case, 
\[
   a(l) \, = \,  \int -\frac{ \cos(\alpha) }{A + \cos(\alpha) \, l} \, =  \, -  \ln(F_0(l))
\]
because of $F_0(l) = A + \cos(\alpha) \, l$,
and we obtain
\begin{eqnarray}
  F_j(l) \ = \   F_0(l)  \Big(  \kappa_j  \ - \ \int{  \frac{ F_{j-1}(t) }  {F^2_0(t)}   \,  \mathrm{d}t                                                      }    \Big).                  \label{Fdgl}
\end{eqnarray}
Next, we consider a linkage of $\phi$-type, for parameter $l \in [l_1, l_2]$, and obtain analogously
\begin{eqnarray}
  \phi_j(l) \ = \   \phi_0(l)  \Big(  \lambda_j  \ - \ \int{  \frac{ \phi_{j-1}(t) }  {\phi^2_0(t)}   \,  \mathrm{d}t }    \Big).                  \label{phidgl}
\end{eqnarray}
Now we determine the constants $\kappa_j, \lambda_j$ such that the solutions~\ref{Fdgl} and~\ref{phidgl} 
describe a contiguous curve. To this end, we must satisfy conditions~\ref{l1} and~\ref{l2}.
We define $\kappa_0 := 1$ and
\[
   \kappa_{j+1} := \frac{\phi_j(l_2)}{F_0(0)} \ + \ \int{ \frac{F_j(t)}{F^2_0(t)}  \mathrm{d}t }|_{l=0}
\]
so that~\ref{Fdgl}  becomes
\begin{eqnarray}
     F_{j+1}(l) \ = \  F_0(l) \, \Big(   \frac{\phi_j(l_2)}{F_0(0)}  \, - \,     \int_{0}^l {  \frac{ F_{j}(t) }{F^2_0(t)} }  \,  \mathrm{d}t     \Big),  \label{FRec}
\end{eqnarray}
which, for $l=0$,  yields $F_{j+1}(0) = \phi_j(l_2)$  (satisfying condition~\ref{l2}).

Similarly, we set $\lambda_0 :=1$ and
\[
   \lambda_{j+1} := \frac{F_{j+1}(l_1)}{\phi_0(l_1)} \ + \ \int{ \frac{\phi_j(t)}{\phi^2_0(t)}  \mathrm{d}t }|_{l=l_1}
\]
so that~\ref{phidgl}  becomes
\begin{eqnarray}
     \phi_{j+1}(l) \ = \  \phi_0(l) \, \Big(   \frac{F_{j+1}(l_1)}{\phi_0(l_1)}  \, - \,     \int_{l_1}^l {  \frac{ \phi_{j}(t) }{\phi^2_0(t)} }  \,  \mathrm{d}t     \Big),  \label{phiRec}
\end{eqnarray}
and for $l=l_1$ we get $F_{j+1}(l_1) = \phi_{j+1}(l_1)$  (satisfying condition~\ref{l1}).

\vspace{\baselineskip}
For simplicity, let us write
\begin{eqnarray}
     G_j(l) \, := \, \frac{F_j(l)}{F_0(l)} \ \mbox{ and } \ \chi_j(l) \, := \, \frac{\phi_j(l)}{\phi_0(l)},     \label{subst1}
\end{eqnarray}
which leads to
\begin{eqnarray}
     G_{j+1}(l) \ &=& \ \frac{\phi_0(l_2)}{F_0(0)} \, \chi_j(l_2) \ - \ \int_{0}^l {  \frac{ G_{j}(t) }{F_0(t)} }  \,  \mathrm{d}t  \label{G1}  \\
  \chi_{j+1}(l) \ &=& \ \frac{F_0(l_1)}{\phi_0(l_1)} \, G_{j+1}(l_1) \ - \ \int_{l_1}^l {  \frac{ \chi_{j}(t) }{\phi_0(t)} }  \,  \mathrm{d}t  \label{chi1}.   
\end{eqnarray}

The integrals in~\ref{G1} and~\ref{chi1} are increasing in $l$ provided that $G_j(t)>0$ and $\chi_j(t)>0$ hold for all $t$. 
This leads to a useful observation. 
In order to find out if the fire fighter is successful, 
(that is, if there exists an index $j$ such that $F_{j}(l) = 0$ holds for some $l \in [0,l_1]$, or $\phi_j(l)=0$ for some $l \in [l_1,l_2]$), 
we need to check only the values $F_j(l_1)$ at the end of each round. 
\begin{lemma}     \label{suff1-lem}
The curve encloses the fire if and only if there exists an index $j$ such that $F_{j}(l_1) \leq 0$ holds.
\end{lemma}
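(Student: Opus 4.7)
The plan is to trade the raw quantities $F_j$ and $\phi_j$ for the normalised ones $G_j = F_j/F_0$ and $\chi_j = \phi_j/\phi_0$ of~\ref{subst1}: since $F_0(l) = A + \cos(\alpha)\,l > 0$ on $[0,l_1]$ and $\phi_0(l) = \cos(\alpha)\,l > 0$ on $[l_1,l_2]$, the signs of $G_j$ and $\chi_j$ agree pointwise with those of $F_j$ and $\phi_j$, and the recursions~\ref{G1} and~\ref{chi1} are simpler to manipulate. The crucial structural fact, read directly off these equations, is a monotonicity: if $G_j(t) > 0$ throughout $[0,l_1]$, then $G_j/F_0$ is strictly positive on $[0,l_1]$, so~\ref{G1} shows $G_{j+1}$ is strictly decreasing on $[0,l_1]$, hence attains its minimum at the right endpoint $l_1$. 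The analogous statement for $\chi_j \to \chi_{j+1}$ puts the minimum at $l_2$.

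The ``if'' direction is immediate: if $F_j(l_1) \leq 0$ for some $j$, then at the corresponding point of $\mbox{FF}_v$ the free string has length at most zero, so the fire is enclosed. For the converse, assume the curve encloses the fire and consider the first point along $\mbox{FF}_v$ at which some $F_j(l^*) \leq 0$ or some $\phi_j(l^*) \leq 0$ occurs. By minimality, every $G_{j'}$ (resp.\ $\chi_{j'}$) at every parameter preceding this point on the curve is strictly positive, so the monotonicity above applies. If the first violation lies in the $F$-family, say $F_j(l^*) \leq 0$, then $G_j$ is strictly decreasing on $[0,l_1]$, whence $G_j(l_1) \leq G_j(l^*) \leq 0$ and thus $F_j(l_1) \leq 0$. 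If instead it lies in the $\phi$-family, then $\chi_j$ is strictly decreasing on $[l_1,l_2]$, so $\chi_j(l_2) \leq \chi_j(l^*) \leq 0$, and the matching condition~\ref{l2} yields $F_{j+1}(0) = \phi_j(l_2) \leq 0$; since $G_j$ was not yet violated we still have $G_j(t) > 0$ on $[0,l_1]$, so $G_{j+1}$ is strictly decreasing, giving $G_{j+1}(l_1) \leq G_{j+1}(0) \leq 0$ and hence $F_{j+1}(l_1) \leq 0$.

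The only subtle point is the $\phi$-family case: one has to carry the violation through the matching condition~\ref{l2} into round $j+1$ and verify that the monotonicity machinery is still available there. That is exactly what the ``first violation'' assumption supplies, since the offending index lies in the $\phi$-family, not in the $F$-family, so $G_j$ is untouched and equation~\ref{G1} still yields a strictly decreasing $G_{j+1}$ whose minimum on $[0,l_1]$ lies at $l_1$.
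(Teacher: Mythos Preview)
Your argument is correct and follows essentially the same approach as the paper: pass to the normalised quantities $G_j,\chi_j$, use the positivity of the previous round's integrand in~\ref{G1} and~\ref{chi1} to obtain monotonicity, and then split into the $F$-case and the $\phi$-case according to where the free string first becomes nonpositive. The only cosmetic difference is that in the $\phi$-case the paper plugs $\chi_j(l_2)\le 0$ directly into~\ref{G1} (together with $G_j>0$) to conclude $G_{j+1}(l)\le 0$ for \emph{all} $l\in[0,l_1]$, whereas you route the argument through the matching condition $F_{j+1}(0)=\phi_j(l_2)$ and then invoke monotonicity of $G_{j+1}$; both are equivalent.
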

\begin{proof}
Clearly, $G_j$ and $F_j$ have identical signs, as well as $\chi_j$ and $\phi_j$ do. Suppose that $G_j >0$ and $G_{j+1}(l) =0$, for some $j$ and some
$l \in [0,l_1]$. By~\ref{G1}, function $G_{j+1}$ is decreasing, therefore $G_{j+1}(l_1) \leq 0$. Now assume that $G_i >0$ holds for all $i$, and that
we have $\chi_{j-1} >0$ and $\chi_{j}(l) =0$ for some $j$ and some $l \in [l_1,l_2]$. By~\ref{chi1} this implies $\chi_{j}(l_2) \leq 0$, and 
from~\ref{G1} we conclude  $G_{j+1} \leq 0$, in particular $G_{j+1}(l_1) \leq 0$.
\end{proof}

Next, we make the integrals in~\ref{G1} and~\ref{chi1} disappear by iterated substitution, and replace variable $l$ with the concrete values of $l_1$ (resp. $l_2$)
given in~\ref{l1-eq} and~\ref{l2-eq}. 

In equation~\ref{G1} iterated substitution yields
\begin{eqnarray}
    G_{j+1}(l) \ = \ \frac{\phi_0(l_2)}{F_0(0)} \, \sum_{\nu = 0}^j (-1)^\nu \,  I_\nu(l) \, \chi_{j-\nu}(l_2) \, + \, (-1)^{j+1} I_{j+1}(l)   \label{Gl}
\end{eqnarray}
where
\[
  I_n(x_n) \ = \  \int_{0}^{x_n} \frac{1}{F_0(x_{n-1})} \int_0^{x_{n-1}} \frac{1}{F_0(x_{n-2})} \ldots \int_0^{x_1} \frac{1}{F_0(x_0)} \, \mathrm{d}x_0 \ldots    \,  \mathrm{d}x_{n-1}.
\]
By induction on $n$ we derive
\[
    I_n(x_n) \ = \ \frac{1}{n!} \, \frac{1}{\cos^n\alpha} \, \big(\ln (\frac{A+\cos(\alpha) x_n}{A})\big)^n
\]
since $F_0(x)=A+\cos(\alpha) \, x$.  By definition of $l_1$, we have $\ln (\frac{A+\cos(\alpha) l_1}{A}) = 2 \pi \cot\alpha$,
so that setting $l=l_1$ in formula~\ref{Gl} leads to
\begin{eqnarray}
    G_{j}(l_1) \ = \ \frac{\phi_0(l_2)}{F_0(0)} \, \sum_{\nu = 0}^j{ \frac{(-1)^\nu}{\nu!} \,  \big( \frac{2 \pi}{\sin\alpha}  \big)^\nu \, \chi_{j-1-\nu}(l_2)}
\end{eqnarray}
where, for convenience, $\chi_{-1}(l_2):=\frac{F_0(0)}{\phi_0(l_2)}$. We observe that this formula is also true for $j=0$.
Multiplying both sides by $F_0(l_1)$, and re-substituting~\ref{subst1},  results in
\begin{eqnarray}
  F_{j}(l_1) \ = \ \frac{F_0(l_1)}{F_0(0)} \, \sum_{\nu = 0}^j{ \frac{(-1)^\nu}{\nu!} \,  \big( \frac{2 \pi}{\sin\alpha}  \big)^\nu \, \phi_{j-1-\nu}(l_2)}
                       \label{finF}
\end{eqnarray}
where $\phi_{-1}(l_2)=F_0(0)$.

\vspace{\baselineskip}
In a similar way we solve the recursion in~\ref{chi1}, using 
\[
   \int_{0}^{x_n} \frac{1}{\phi_0(x_{n-1})} \int_0^{x_{n-1}}  \ldots \int_0^{x_1} \frac{1}{\phi_0(x_0)} \, \mathrm{d}x_0 \ldots    \,  \mathrm{d}x_{n-1} \ = \ \frac{1}{n!} \, \frac{1}{\cos^n\alpha} \, \big(\ln (\frac{ x_n}{l_1})\big)^n
\]
and $\ln(\frac{l_2}{l_1}) = \alpha \cot\alpha$. One obtains, after substituting $l=l_2$,
\begin{eqnarray}
    \phi_{j}(l_2) \ = \ \frac{\phi_0(l_2)}{\phi_0(l_1)} \, \sum_{\nu = 0}^j{ \frac{(-1)^\nu}{\nu!} \,  \big( \frac{\alpha}{\sin\alpha}  \big)^\nu \, \hat{F}_{j-\nu}(l_1)}
                       \label{finphi}
\end{eqnarray}
where $\hat{F}_{0}(l_1):=\phi_0(l_1)$ and $\hat{F}_{i+1}(l_1):=F_{i+1}(l_1)$.

\section{Generating functions and singularities}

The cross-wise recursions~\ref{finF} and~\ref{finphi} are convolutions. In order to solve them
for the numbers $F_j(l_1)$ we are interested in, we define the generating functions
\[
F(Z) := \sum_{j=0}^\infty F_j \, Z^j\ \mbox{ and }  \  \phi(Z) := \sum_{j=0}^\infty \phi_j \, Z^j 
\]
where $F_j:=F_j(l_1)$ and $\phi_j:= \phi_{j}(l_2)$, for short. From~\ref{finF} we obtain 
\begin{eqnarray}
    F(Z) \ = \ \frac{F_0}{F_0(0)} \, e^{-\frac{2 \pi}{\sin\alpha} Z} \, \big( Z \, \phi(Z) + F_0(0)  \big),    \label{Fsum1}
\end{eqnarray}
and from~\ref{finphi},
 \begin{eqnarray}
    \phi(Z) \ = \ \frac{\phi_0}{\phi_0(l_1)} \, e^{-\frac{\alpha}{\sin\alpha} Z} \, \big( Z \, F(Z) - F_0 + \phi_0(l_1)  \big)    \label{phisum1}.
\end{eqnarray}
Both equalities can be verified by plugging in expansions of the exponential functions, using $e^W=\sum_{j=0}^\infty {\frac{W^j}{j!}}$, computing the products, and comparing coefficients. Now we substitute~\ref{phisum1} into~\ref{Fsum1},
solve for $F(Z)$, divide both sides by $F_0$ and expand by $e^{\frac{2 \pi + \alpha}{\sin\alpha}}$ to obtain 
\begin{eqnarray}
  \frac{F(Z)}{F_0} \ = \  \frac{ e^{v Z} \ - \ r \, Z}{ e^{w Z} \ - \ s \, Z},    \label{efunc}
\end{eqnarray}
where $v, r, w, s$ are the following functions of $\alpha = \cos^{-1}(1/v)$:
\begin{eqnarray}
   v \ &=& \  \frac{\alpha}{\sin\alpha} \ \ \ \  \, \mbox{ and } \ r \ = \ e^{\alpha \cot\alpha}\nonumber   \\
   w \ &=& \ \frac{2 \pi +\alpha}{\sin\alpha} \ \  \mbox{ and } \ s \ = \ e^{(2 \pi + \alpha) \cot\alpha}. \label{sw}
\end{eqnarray}

Singularities of $F(Z)$ can arise only from zeroes of the denominator,
\begin{eqnarray}     \label{denominator}
e^{w Z}  -  s  Z. 
\end{eqnarray}
As the fighter's speed $v$ increases from 1 to $\infty$, angle $\alpha=\cos^{-1}(1/v)$ of the fighter's velocity vector increases from 0 to $\pi/2$, causing $s/w$ to decrease from $\infty$ to 0. Precisely at $v=v_c=2.6144\ldots$ does $s/w=e$ hold; then $1/w$ is a real root of~\ref{denominator}, as direct computation shows. For $v>v_c$ we have $s/w < e$.

The following lemma can be inferred from Falbo~\cite{f-ansdd-95}. A complete proof is given in Subsection~\ref{Falbo-subsec} below.
\begin{lemma}   \label{FalboApp-lem}
For $v>v_c$ function $F(Z)$ has infinitely many discrete conjugate complex poles of order~1. The pair $z_0, \overline{z_0}$ nearest to the origin have absolut values $< 0.31$, all other poles have moduli $\geq 1$. As $v$ decreases to $v_c$, poles $z_0, \overline{z_0}$ converge towards a real pole $1/w \approx 0.124$.
\end{lemma}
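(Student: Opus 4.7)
The plan is to reduce the transcendental equation $e^{wZ}-sZ=0$ to canonical form via the substitution $Y=wZ$, obtaining $Ye^{-Y}=w/s=:u$. Since the real map $Y\mapsto Ye^{-Y}$ attains its maximum $1/e$ at $Y=1$, and since $s/w<e$ precisely when $v>v_c$, the regime of interest corresponds to $u>1/e$, where the equation admits no real solutions; reality of the coefficients then forces every root to come in a complex conjugate pair. Simplicity is immediate: $(e^{wZ}-sZ)'=we^{wZ}-s$ equals $s(wZ-1)$ at any root and vanishes only at $Z=1/w$, which in turn satisfies the original equation only when $e=s/w$, i.e.\ precisely at $v=v_c$. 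So for $v>v_c$ every zero is a simple pole.

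I would then locate the roots. Writing $Y=x+iy$ with $y>0$ and splitting $Ye^{-Y}=u$ into real and imaginary parts yields $x=y\cot y$ together with
\[
  g(y)\;:=\;\frac{y}{\sin y}\,e^{-y\cot y}\;=\;u.
\]
Positivity of $u$ forces $y/\sin y>0$, restricting $y$ to $I_k:=(2(k{-}1)\pi,(2k{-}1)\pi)$ for $k\geq 1$. A monotonicity check (via $(\ln g)'(y)=1/y-2\cot y+y\csc^2 y>0$ on each $I_k$) shows $g$ rises from $1/e$ to $\infty$ on $I_1$ and from $0$ to $\infty$ on $I_k$ for $k\geq 2$, so every $u>1/e$ produces a unique solution $y_k$ per interval. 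Each gives a simple conjugate pair of poles $\zeta_k=(y_k\cot y_k\pm iy_k)/w$ of $F(Z)$, after a quick check that the numerator does not also vanish there.

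For the modulus bounds I would use $|\zeta_k|=y_k/(w\sin y_k)$. A fortunate direct substitution shows that $y_2=2\pi+\alpha$ is the unique solution in $I_2$: since $\sin y_2=\sin\alpha$ and $\cot y_2=\cot\alpha$, one computes $g(y_2)=(2\pi+\alpha)/\sin\alpha\cdot e^{-(2\pi+\alpha)\cot\alpha}=w/s=u$ identically, yielding $\zeta_2=e^{i\alpha}$ and $|\zeta_2|=1$ exactly on the unit circle. For $k\geq 3$, $y_k>4\pi$ while $w=(2\pi+\alpha)/\sin\alpha$ stays below $9$ for $\alpha\in(\alpha_c,\pi/2)$ (since $\sin\alpha\geq\sin\alpha_c>0.9$), so $|\zeta_k|\geq y_k/w>4\pi/9>1$. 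For the closest pair, $k=1$, the limit $v\to v_c^+$ forces $y_1\to 0^+$ and $\zeta_1\to 1/w\approx 0.124$; the opposite limit $v\to\infty$ yields $u\to 5\pi/2$, a numerical root $y_1\approx 2.08$, and $|\zeta_1|\to y_1/((5\pi/2)\sin y_1)\approx 0.305$.

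The convergence $\zeta_1\to 1/w$ as $v\to v_c^+$ follows from continuity of the implicit function $y_1(u)$ at the bifurcation value $u=1/e$, where the conjugate pair collides with the real double root $Y=1$. The main obstacle I anticipate is the quantitative bound $|\zeta_1|<0.31$: since the extreme value $\approx 0.305$ is attained in the limit $v\to\infty$, the margin is very tight, and a rigorous monotonicity proof for $|\zeta_1|(v)$ will probably rely on the cleaner identity $|\zeta_k|=\exp\!\big(y_k\cot y_k-(2\pi+\alpha)\cot\alpha\big)$, obtained by eliminating $u=w/s$ from $g(y_k)=u$, together with careful analysis of how $y_1\cot y_1$ and $(2\pi+\alpha)\cot\alpha$ balance as $\alpha$ varies over $(\alpha_c,\pi/2)$.
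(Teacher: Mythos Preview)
Your reduction is essentially the paper's: the substitution $Y=wZ$ and the separation into $x=y\cot y$, $g(y)=u$ is the same as the paper's analysis of $h(X)=e^{X\cot X}\sin X$ versus the line $qX$ with $q=s/w$ (your $g$ is $X/h(X)$). The simplicity argument, the interval decomposition, and the limiting behaviour at $v_c$ all match. Your explicit identification $y_2=2\pi+\alpha$, $\zeta_2=e^{i\alpha}$ is exactly what the paper records (somewhat obliquely) in its figure caption ``abs$1=1$''.

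There is, however, a genuine gap hiding behind the phrase ``after a quick check that the numerator does not also vanish there.'' This check is \emph{not} quick, and in fact it \emph{fails} at $\zeta_2$: with $v=\alpha/\sin\alpha$ and $r=e^{\alpha\cot\alpha}$ one has $v\zeta_2=\alpha\cot\alpha+i\alpha$, hence $e^{v\zeta_2}=e^{\alpha\cot\alpha}e^{i\alpha}=r\zeta_2$, so the numerator $e^{vZ}-rZ$ vanishes at $\zeta_2$ as well. Thus $\zeta_2$ is \emph{not} a pole of $F(Z)$; it cancels. The paper devotes an entire claim (its Claim~3) to characterising all common zeroes of numerator and denominator, showing they are exactly $e^{i\alpha}$ and, for exceptional rational multiples $\alpha=\tfrac{2p}{q}\pi$, the points $\cos\alpha+(q{+}1)\sin\alpha\,i$---all of modulus at least~$1$.

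This cancellation does not invalidate the lemma's conclusions (the surviving poles for $k\ge3$ still have moduli $>1$, and the statement only asks for $\ge1$), but you do need the full characterisation of common zeroes to be sure that $\zeta_1$ itself is a genuine pole rather than another cancellation. Since all common zeroes have modulus $\ge1>0.31>|\zeta_1|$, this follows---but only once Claim~3 is in hand. So your outline is sound, but the numerator analysis is a real piece of work you have swept under the rug, and your assertion that $\zeta_2$ is a pole is incorrect.
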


Now we directly obtain the following.

\begin{theorem}   \label{qual-theo}
At speed $v>v_c=2.6144\ldots$ curve $\mbox{FF}_v$ contains the fire.
\end{theorem}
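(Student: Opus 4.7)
The plan is to combine Lemma~\ref{suff1-lem} with Pringsheim's theorem on power series with non-negative real coefficients, applied to the generating function $F(Z)/F_0 = (e^{vZ} - rZ)/(e^{wZ} - sZ)$ from equation~\ref{efunc}. By Lemma~\ref{suff1-lem}, to show containment it suffices to produce a single index $j$ with $F_j \leq 0$, i.e., to rule out the possibility that the entire coefficient sequence $(F_j)_{j\geq 0}$ consists of positive numbers.

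So I would argue by contradiction: assume $F_j > 0$ for every $j$. The function $F(Z)/F_0$ is meromorphic, and its singularities are precisely those zeroes of $e^{wZ} - sZ$ that are not cancelled by zeroes of $e^{vZ} - rZ$. By Lemma~\ref{FalboApp-lem}, for $v > v_c$ the nearest true poles are a conjugate complex pair $z_0, \overline{z_0}$ of order one with $|z_0| < 0.31$, and every other pole has modulus at least $1$; in particular, no pole of modulus strictly less than $1$ is real. Therefore the radius of convergence of the power series $F(Z)/F_0$ equals $R = |z_0| < 0.31$, while every candidate real positive singularity lies at distance at least $1 > R$ from the origin.

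Pringsheim's theorem now states that a power series with non-negative real coefficients and finite radius of convergence $R$ must have a singularity of its sum at the point $Z = R$ on the positive real axis. Under our standing assumption $F_j > 0$, this forces a singularity of $F(Z)/F_0$ on $(0,\infty)$ at distance exactly $R < 1$, contradicting the previous paragraph. Hence some $F_j \leq 0$, and Lemma~\ref{suff1-lem} yields that curve $\mbox{FF}_v$ encloses the fire.

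The main obstacle is really packaged inside Lemma~\ref{FalboApp-lem}: one must verify that for $v > v_c$ the zeroes of the denominator $e^{wZ} - sZ$ are in fact an infinite discrete set of simple complex-conjugate pairs, that the pair of smallest modulus lies strictly inside the disk of radius $1$ (and in particular strictly inside the disk of radius $0.31$), and that these pairs are not cancelled by zeroes of the numerator $e^{vZ} - rZ$. Once Lemma~\ref{FalboApp-lem} is at our disposal, the theorem itself is a one-line application of Pringsheim.
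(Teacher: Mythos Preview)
Your proposal is correct and follows essentially the same line as the paper's own proof: assume all $F_j>0$, invoke Lemma~\ref{FalboApp-lem} to see that the radius of convergence $R<0.31$ corresponds to a non-real pole, and derive a contradiction from Pringsheim's theorem, which would force a singularity at the real point $R$. The paper's version is terser but identical in substance, with Lemma~\ref{suff1-lem} (which you cite explicitly) appearing only implicitly in the phrase ``if the fighter were unsuccessful then all $F_j$ would be positive.''
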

\begin{proof}
From Lemma~\ref{FalboApp-lem} we know that $F(Z)$ has a radius of convergence $R$ in $(0,0.31)$. If the fighter were unsuccessful then
all coefficients $F_j$ of $F(Z)$ would be positive. By a theorem of Pringsheim's (see, e.g.,~\cite{fs-ac-09} p.~240), this would imply that $R$ is a singularity of $F(Z)$. But there are only complex singularities, due to Lemma~\ref{FalboApp-lem}.
\end{proof}

\subsection{Proof of Lemma~\ref{FalboApp-lem}}     \label{Falbo-subsec}
The equation $e^{wZ} - sZ  =  0$ has received some attention in the field of delay differential equations, 
see, e.g., Falbo~\cite{f-ansdd-95}. 
With the following claim our main interest
will be in case~(i) and its transition into case~(ii).

\noindent
{\bf Claim 1}   \label{zero-lem}
$\mbox{}$\\
(i) If $\frac{s}{w} < e$ then the equation $e^{wZ} - sZ  =  0$ has an infinite number of non-real, 
discrete conjugate pairs of complex roots. \\
(ii) As $\frac{s}{w}$ increases to $e$, the pair of complex roots $z_0$ and $\overline{z_0}$ of minimum imaginary part
converge to the real zero $x_0=1/w$.\\
(iii) For $\frac{s}{w} > e$, the real zero $x_0$ splits into two different real zeros.

\begin{proof}
Let $z=a+i \, b$ be a complex zero of $e^{wZ} - sZ \, = \, 0$, for real parameters $w,s \not= 0$, that is,
\begin{eqnarray}
     e^{wa} \, \big( \cos(wb) + i\, \sin(wb) \big) \, = \, sa+i \, sb.   \label{zero}
\end{eqnarray}
If the imaginary part $b$ of $z$ equals zero then $e^{wa}=sa$, hence
\[
     \frac{e^{wa}}{wa} \, = \, \frac{s}{w}. 
\]
This implies $\frac{s}{w} \geq e$; see Figure~\ref{exdx-fig}.
\begin{figure}
\begin{center}
\includegraphics[scale=0.4]{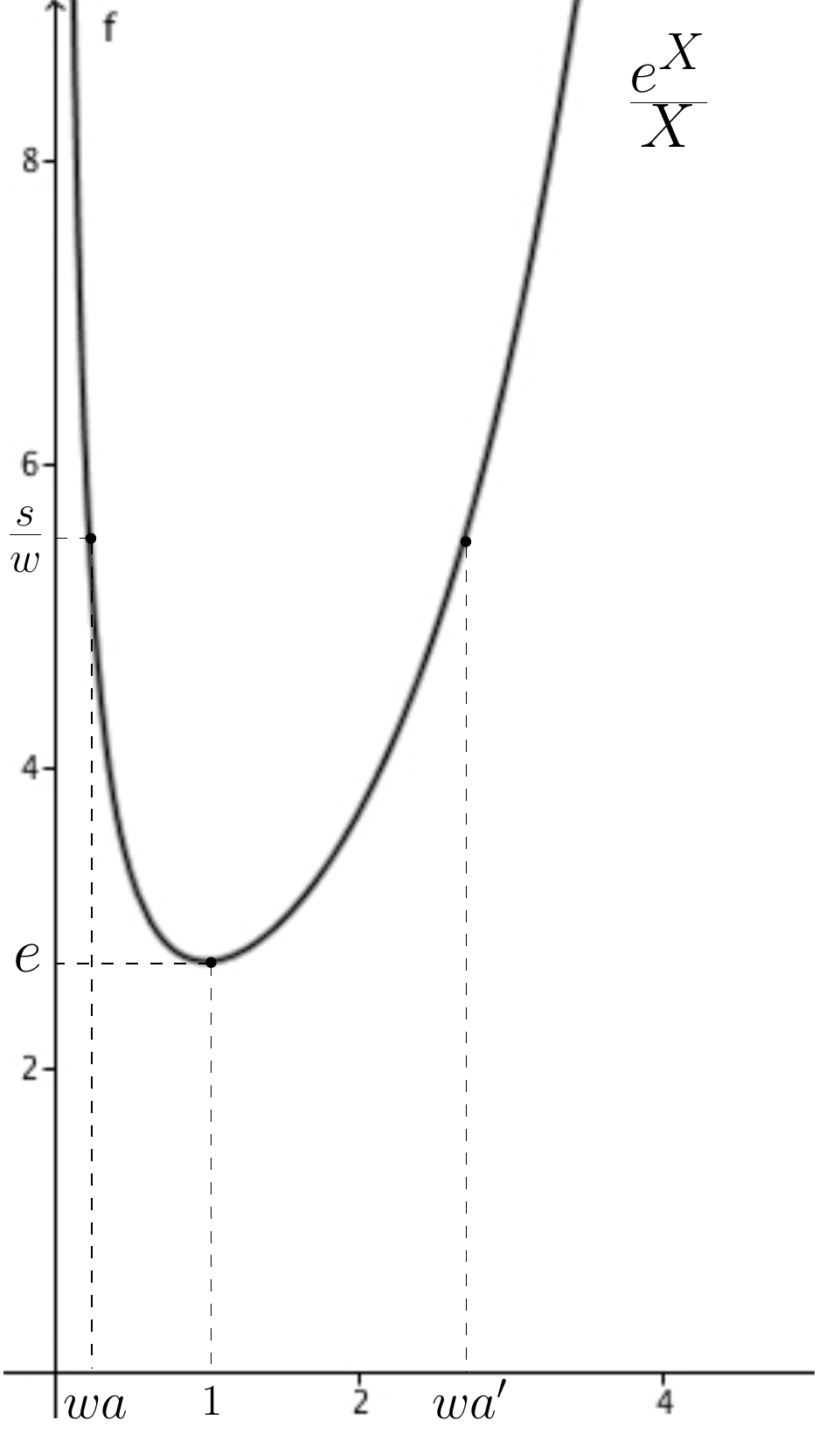}
\caption{The real function $e^X/X$ takes on only values $\geq e$, and those $>e$ exactly twice.}
\label{exdx-fig}
\end{center}
\end{figure}
Now suppose that $b\not=0$ holds. Then~\ref{zero} implies 
\begin{eqnarray*}
   e^{wa} \, \cos(wb) \, &=& \, sa \\
  e^{wa} \, \sin(wb) \, &=& \, sb ,
\end{eqnarray*}
hence $\cot(wb) = \frac{a}{b}$ and
\begin{eqnarray*}
   e^{wb \, \cot(wb)} \, \sin(wb) \, = \, \frac{s}{w} \, wb.     \label{imfunc} 
\end{eqnarray*}
The graph of the real function $h(X):=e^{X \cot X} \sin X$ intersects the line $q \, X$ in an infinite number of discrete points;
see Figure~\ref{ke-fig}.
\begin{figure}
\begin{center}
\includegraphics[scale=0.4]{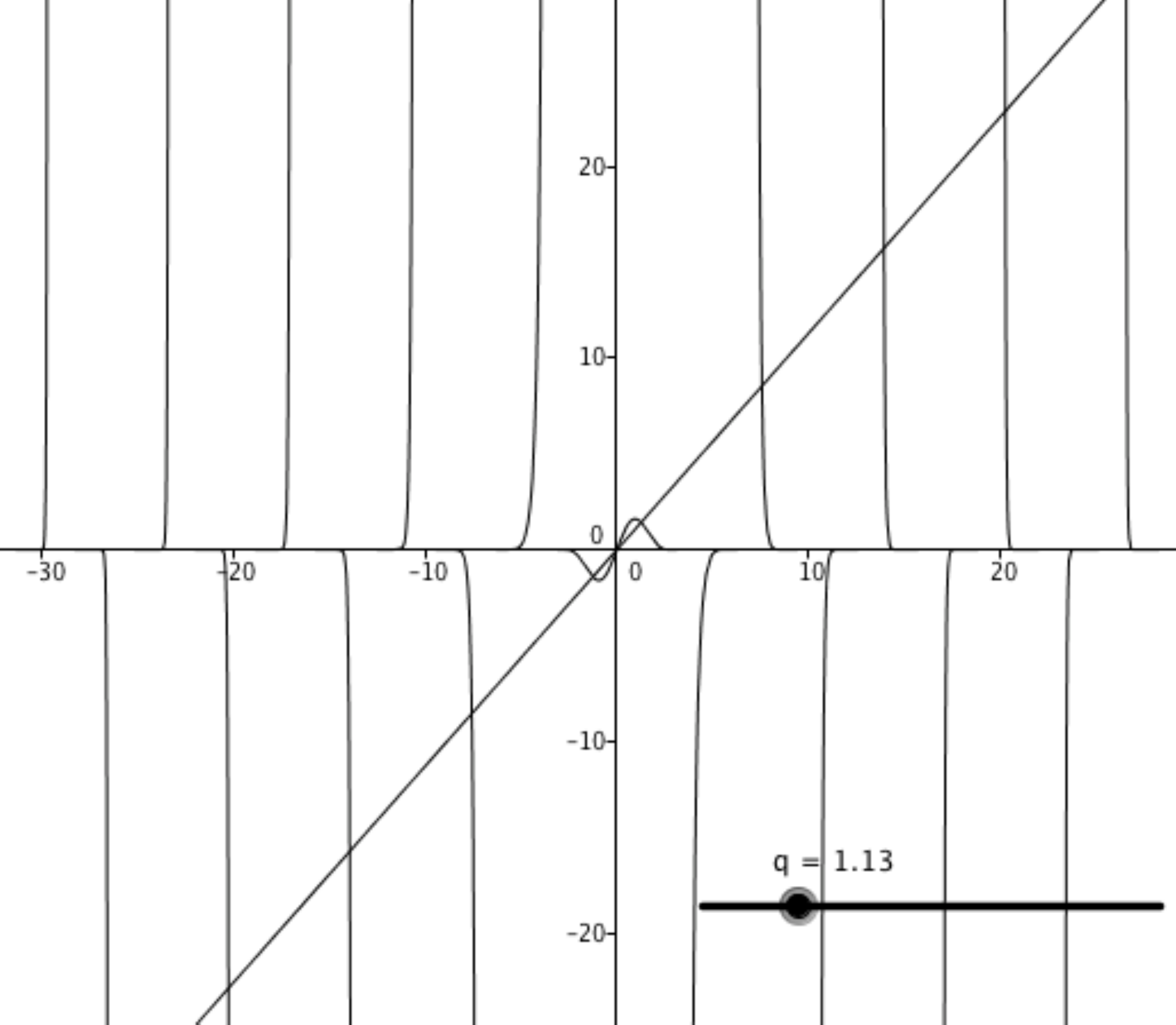}
\caption{As slope $q$ of line $qX$ grows to $e$, its intersections with the center part of $e^{X \cot X} \sin X$ disappear.}
\label{ke-fig}
\end{center}
\end{figure}
Each intersection point $p$ with abscissa $x$ corresponds to a zero $\cot(x)\frac{x}{w}+\frac{x}{w} \, i$  
of $e^{wZ} - sZ  =  0$, of absolute value 
$\frac{1}{\sin^2 x} \,  \frac{x^2}{w^2}$.
Function $h(X)$ has poles at the integer multiples of $\pi$. As shown in Figure~\ref{poles-fig},
the first intersection point to the right of $0$ has abscissa 
$x_0<\pi$, the following ones,  $x_k>2k\pi$.

As slope $q$ of the line $qX$ increases beyond $e$, 
its two innnermost intersections $\not=0$ with the graph of $h(X)$ disappear.
Thus, the imaginary parts of  $z_0, \overline{z_0}$ become zero,
causing a ``double'' real zero at $1/w$. As $q$ grows beyond $e$, this zero splits
into two simple zeroes $a/w$ and $a'/w$; compare Figure~\ref{exdx-fig}.

For later use we note the following. While slope $q$ is less than $e$ we can write the zero $z_0$
of positive imaginary part associated with $p_0$ as 
\begin{eqnarray}
     z_0 \, = \,  a + b \, i \, = \, \rho \, (\cos(\phi) + \sin(\phi)) \, i.     \label{z0}
\end{eqnarray}
This representation yields $\cot(\phi) = a/b$. Since we have also derived $\cot(wb) = a/b$ it follows that angle $\phi$
of $z_0$ and $wb$ are congruent modulo $\pi$. Since $wb=x_0 < \pi$ and $\phi < \pi$ because of $b>0$
we conclude that
\begin{eqnarray}
     \phi \, = \, wb \, = \, w \rho \sin(\phi)     \label{phival}
\end{eqnarray}
is the smallest positive solution of $e^{X \cot X} \sin X = qX$.
\begin{figure}
\begin{center}
\includegraphics[scale=0.4]{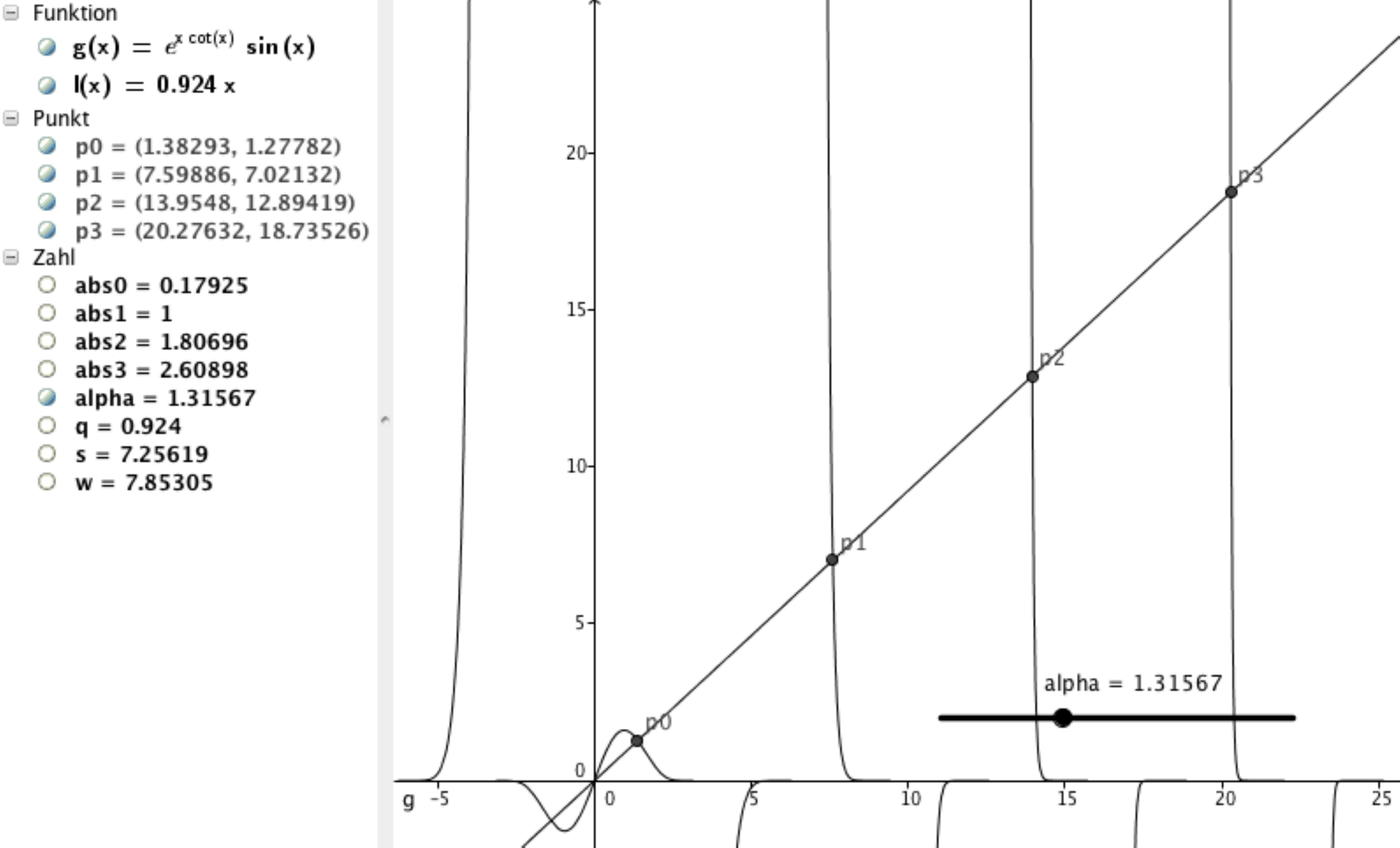}
\caption{The first intersection points of $h(X):=e^{X \cot X} \sin X$ with line $qX$ to the right of $0$. 
Here, $q=s/w$ is a function of angle $\alpha$. The numbers $\mbox{abs}i$ denote the absolute values of the zeroes
of $e^{wZ} - sZ  =  0$ that correspond to the intersection points $p_i$, for $i=0,\ldots,3$. The value of $\mbox{abs}0$
is decreasing towards $0.30563$, as $\alpha$ tends to $\pi/2$. We have $\mbox{abs}1=1$ because of the zero in
Lemma~\ref{common-lem}~(i). All other zeroes have absolute values substantially larger than~1.}
\label{poles-fig}
\end{center}
\end{figure}

\end{proof}

First we show that only poles can arise from these zeroes.

\noindent
{\bf Claim 2}      \label{pole-lem}
$\mbox{}$\\
Each zero $u=a+b \, i$ (except $1/w$ in case~(ii) of Claim 1) is a pole 
of order one of the function 
\[
d(Z)\, :=\, \frac{1}{e^{w Z} - s \, Z} 
\]
with residue $\mu=((wu-1)s)^{-1}$.

\begin{proof}
We have 
\[
\frac{Z-u}{e^{wZ}-sZ} \, = \, \frac{Z-u}{e^{wZ}-e^{wu}+su-sZ} \, = \, \frac{1}{\frac{e^{wZ}-e^{wu}}{Z-u}-s}.
\]
As $Z$ tends to $u$, the differential quotient in the denominator tends to the finite number $(e^{wZ})'(u) = we^{wu} = wsu$.
Hence,  $u$ is a pole of order~1, and $d(Z)$ has local expansion
\[
   d(Z) \ = \ \frac{\mu}{Z-u} \, + \, \sum_{i=0}^\infty w_i \, (Z-u)^i.
\]
\end{proof}
One can show that, in case~(ii), zero $1/w$ gives rise to a pole of order two of $d(Z)$.
Thus, function $d(Z)$, and therefore 
\begin{eqnarray}
    f(Z) \ := \  \frac{ e^{v Z} \ - \ r \, Z}{ e^{w Z} \ - \ s \, Z}   \label{cfunc}
\end{eqnarray}
are meromorphic.

Next, we consider which of the poles of $d(Z)$ cancel out in the numerator of $f(Z)$. From now on, 
the parameters $v,r,w,s$ are no longer considered independent but functions of angle~$\alpha$.

\noindent
{\bf Claim 3}         \label{common-lem}
$\mbox{}$\\
Numerator and denominator of function $f(Z)$ in~\ref{cfunc} have the following zeroes in common:\\
(i) $\cos(\alpha) + \sin(\alpha) \, i$   \\
(ii) $\cos(\alpha) + (q+1)\sin(\alpha) \, i$ for each integer $q$ satisfying $\alpha=\frac{2p}{q} \pi$, for some integer $p$.\\

\begin{proof}
Let $z=a+b \,i$ be a common zero of $e^{vZ}-rZ$ and $e^{wZ}-sZ$. As in the proof of Claim~1
we have
\begin{eqnarray}
   e^{wb \, \cot(wb)} \, \sin(wb) \, &=& \, sb     \\
   e^{wb \, \cot(wb)} \, \cos(wb) \, &=& \, sa          \label{s}
\end{eqnarray}
and, analogously,
\begin{eqnarray}
   e^{vb \, \cot(vb)} \, \sin(vb) \, &=& \, rb     \\
   e^{vb \, \cot(vb)} \, \cos(vb) \, &=& \, ra.         \label{r}
\end{eqnarray}
This implies $\cot(wb)=a/b=\cot(vb)$, hence $wb=vb+k\pi$ for some integer~$k$.
Because $s$ and $r$ are positive for all $\alpha$, the expressions in~\ref{s} and~\ref{r}
must have the same sign, and we conclude that $k=2h$ is even. This implies 
$\sin(wb)=\sin(vb)$ and from
\[
    \frac{2\pi}{\sin\alpha} b = (w-v)b =2h\pi
\]
follows $b=h\sin\alpha$.
 Moreover, we have
\[
   e^{\frac{2\pi}{\sin\alpha} a} = e^{(w-v)a} = \frac{e^{wa}\sin(wb)}{e^{va}\sin(vb)} = \frac{sb}{rb} =\frac{s}{r} = e^{2\pi \cot\alpha},
\]
and we obtain $a=\cos\alpha$. This yields
\[
r \cos(vb) = e^{v \cos\alpha}\cos(vb) = e^{va}\cos(vb) = ra = r\cos\alpha,
\]
hence $vb=\alpha + 2p\pi$ for some integer $p$, and from 
$h\alpha = hv\sin\alpha=vb$ follows
\[
     \alpha \, = \, \frac{2p}{h-1} \pi.
\]
\end{proof}

From now on we consider only case~(i) of Claim~1. 
Since $\frac{s}{w}$ is a strictly decreasing function in $\alpha$,  we have
\begin{eqnarray}
\frac{e^{(2 \pi + \alpha) \cot\alpha}}{\frac{2 \pi + \alpha}{\sin\alpha}} \ =  \ \frac{s}{w} \ &<& \ e           \label{alfac} \\
     \Longleftrightarrow \ \alpha \, &>& \, \alpha_c \, := \, 1.17830\ldots.
\end{eqnarray}
The critical angle $\alpha_c$ corresponds to a speed $v_c = 1/\cos(\alpha_c) = 2.61440\ldots$. 
For $\alpha \in (\alpha_c, \pi/2)$ we can summarize our findings as follows. 

\noindent
{\bf Claim 4}    \label{summa-lem}
$\mbox{}$\\
For $\alpha \in (\alpha_c, \pi/2)$, function $f(Z)$ in~\ref{cfunc} has only non-real, first-order poles as singularities. A conjugate pair $z_0, \overline{z_0}$ is situated at distance $<0.31$ from the origin. All other poles are of absolute value $>1$. For $\alpha \rightarrow \alpha_c$ both $z_0,\overline{z_0}$ converge to the real pole $(1/w,0)$ where $1/w  \approx 0.12383.$ 

\begin{proof} 
By Claim~2, and by Claim~1~(i), 
function $f(Z)$ has only poles of order one for singularities, none of which are real. Zero $\cos\alpha+ \sin\alpha \, i$
of the denominator is of absolute value~1, but it is not a pole of $f(Z)$, by Claim~3~(i). All other zeroes of the denominator canceling out must be of absolute value $>1$, by Claim~3~(ii). Hence, $z_0$ and $\overline{z_0}$
are in fact poles of $f(Z)$. The bounds on the absolute values can be obtained by numerical evaluation; see Figure~\ref{poles-fig}.
\end{proof}
This concludes the proof of Lemma~\ref{FalboApp-lem}.

\section{Residue analysis}
From now on let $v>v_c$.
In order to find out how many rounds barrier curve $\mbox{FF}_v$ runs before it closes down on itself we employ a technique of Flajolet's; see~\cite{fs-ac-09} p.~258 ff. Instead of $F(Z)=F_0\, f(Z)$ we consider the function 
\begin{eqnarray}
   g(Z) \ &:=& \   \frac{1}{Z^{j+1}} \, \frac{F(Z)}{F_0}     \ = \  \frac{1}{Z^{j+1}} \,  \frac{ e^{v Z} \ - \ r \, Z}{ e^{w Z} \ - \ s \, Z}  \\
        \ &=& \ \frac{\frac{F_0}{F_0}}{Z^{j+1}} \, + \, \frac{\frac{F_1}{F_0}}{Z^{j}} \, + \, \frac{\frac{F_2}{F_0}}{Z^{j-1}}
                                         + \, \ldots \, \frac{{\color{blue} \frac{F_j}{F_0}}}{Z} \ + \ \sum_{i=0}^\infty \frac{F_{j+i+1}}{F_0} \, Z^i   \label{zeropole}
\end{eqnarray}
By Lemma~\ref{FalboApp-lem}, function $g(Z)$ has complex poles at $z_0$ and $\overline{z_0}$, and a real pole at the origin, and these are the only singularities inside the circle $\Gamma$ of radius~$\gamma:=0.9$; see Figure~\ref{poles3-fig}. Let $\mu$ and $\overline{\mu}$ denote the residues of the poles at $z_0$ and $\overline{z_0}$. By~\ref{zeropole}, the pole at~0 has residue $F_j/F_0$, the coefficient we are interested in.
\begin{figure}
\begin{center}
\includegraphics[scale=0.6]{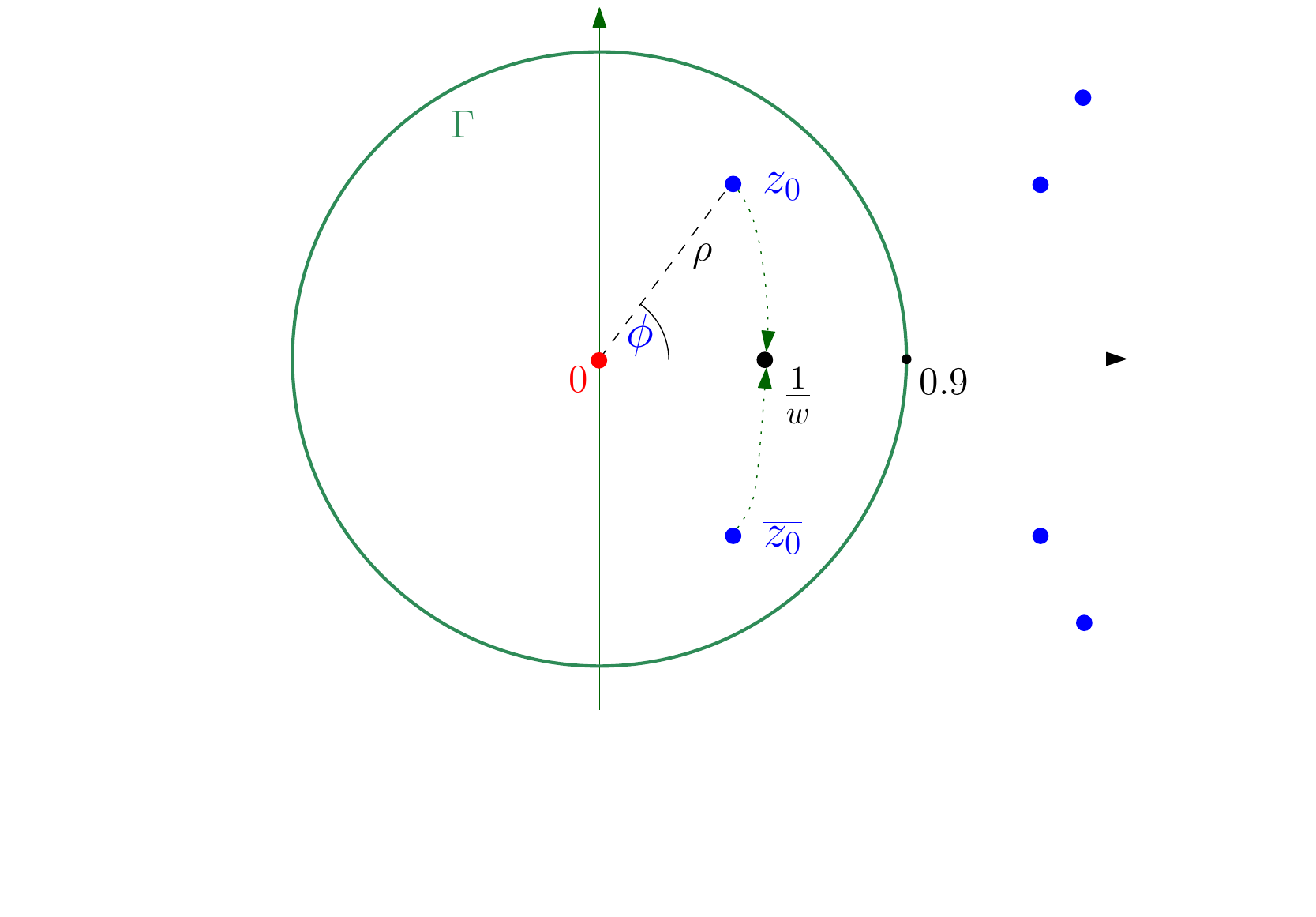}
\caption{The poles of minimum modulus of $F(Z)$.}
\label{poles3-fig}
\end{center}
\end{figure}
By Cauchy's Residue Theorem, 
\begin{eqnarray}
\frac{F_j}{F_0} \, &=& \, - (\mu + \overline{\mu}) \ + \  \frac{1}{2 \pi \, i} \, \int_\Gamma g(u) \, \mathrm{d}u .   \label{CRT}
\end{eqnarray}
%

The integral can be upper bounded by
\begin{eqnarray}
    \abs{\frac{1}{2 \pi \, i} \, \int_\Gamma\frac{f(z)}{z^{j+1}}  \, \mathrm{d}z} \ \leq \ \frac{1}{2 \pi} \, 
    D \,    \int_\Gamma\frac{1}{\abs{z^{j+1}}}  \, \mathrm{d}z \ &=& \ \frac{1}{2 \pi} \, D \, \Big(\frac{1}{\gamma}\Big)^{j+1} \, 2 \pi \, \gamma   \\
                            &=& \ D \, \gamma^{-j}    \label{int}
\end{eqnarray}  
because all $z$ on $\Gamma$ have absolute value $\gamma=0.9$. Here, $D$ denotes the maximum value function $\abs{f(z)}$ attains on the compact set $\Gamma \times [\alpha_c,\pi/2]$.

Now let
\begin{eqnarray}
    z_0 \, = \, \rho (\cos\phi + \sin\phi \, i)  \, = \, a + b \, i                   \label{coord}
\end{eqnarray}
be the pole different from zero whose imaginary part, $b$, is positive. 
Let us recall from~\ref{phival}
in the proof of Claim~1 that $\phi$, the angle of $z_0$, is the smallest positive real number
solving $e^{x \cot(x)} \sin(x) = \frac{s}{w} x$, with $s, w$ as defined in~\ref{sw}. We also have shown that
\begin{eqnarray}
    \phi \, = \, w b \, = \, w \rho \, sin\phi     \label{philab}
\end{eqnarray}
holds. Furthermore, $\rho=\abs{z_0}$. Let $x_0=(1/w,0)$ denote the real pole to which $z_0$ and $\overline{z_0}$  converge as 
$\alpha$ decreases to $\alpha_c$; compare Claim~4. We obtain the following identities.
\begin{eqnarray}
\ w \abs{z_0-x_0} \ &=& \ \sqrt{w^2 \rho^2 \, - \, 2w \rho \cos(\phi) +1}     \\
                                &=& \ \sqrt{\frac{\phi^2}{\sin^2\phi} -2\phi \frac{\cos\phi}{\sin\phi} +1}  \\
                                &=& \ \sqrt{(wa-1)^2+w^2b^2}   \label{wurz}
\end{eqnarray}

The sum of residues can be written as follows.
\begin{lemma}      \label{ressum-lem}
We have
\begin{eqnarray}
- (\mu + \overline{\mu}) \ = \ \Big(\  \frac{e^{va}}{\rho^2}\, \cos((j+1)\phi - vb) \, &-& \, \frac{e^{va}w}{\rho}\,      \label{major}
                                                                \cos((j+2)\phi -vb)    \\
                                                          - \ \big(  \ \frac{r}{\rho} \, \cos(j \phi) \, &-&  \ rw \, \cos((j+1)\phi)   \ \big) \  \Big)   \label{minor}\\ 
\ \cdot \  \frac{2}{s\big( (wa-1)^2+w^2b^2 \big)} \, &\cdot& \, \frac{1}{\rho^{j-1}}      \label{power}
\end{eqnarray}
\end{lemma}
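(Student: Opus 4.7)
The plan is to compute $\mu$ directly using Claim~2 and then collapse the sum $\mu+\overline{\mu}$ via complex-conjugate symmetry. By Claim~2, the residue of $d(Z)=1/(e^{wZ}-sZ)$ at the simple pole $z_0$ equals $((wz_0-1)s)^{-1}$. Since $g(Z)=\frac{e^{vZ}-rZ}{Z^{j+1}}\,d(Z)$ and $z_0\neq 0$, the factor $\frac{e^{vZ}-rZ}{Z^{j+1}}$ is analytic at $z_0$, so
\[
\mu \ = \ \frac{e^{v z_0}-r z_0}{z_0^{j+1}(w z_0-1)\,s}.
\]
All parameters $v,r,w,s$ are real, hence $\overline{\mu}$ is the complex conjugate of $\mu$, and $\mu+\overline{\mu}=2\,\mathrm{Re}(\mu)$.

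To compute this real part I introduce the polar form $z_0=\rho e^{i\phi}=a+b\,i$, so that $z_0^{-(j+1)}=\rho^{-(j+1)}e^{-i(j+1)\phi}$ and $e^{vz_0}=e^{va}e^{ivb}$. I rationalize
\[
\frac{1}{w z_0-1} \ = \ \frac{(wa-1)-wb\,i}{M}, \qquad M:=(wa-1)^2+w^2 b^2,
\]
which matches the denominator already encountered in~\ref{wurz}. I split $\mu=\mu_1+\mu_2$, where $\mu_1=\frac{e^{vz_0}}{z_0^{j+1}(wz_0-1)s}$ and $\mu_2=-\frac{r}{z_0^j(wz_0-1)s}$ (cancelling one $z_0$). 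Multiplying out and letting $\theta:=(j+1)\phi-vb$, the real parts come to
\[
2\,\mathrm{Re}(\mu_1) \ = \ \frac{2\,e^{va}}{s\,\rho^{j+1} M}\bigl[(wa-1)\cos\theta - wb\sin\theta\bigr],
\]
\[
2\,\mathrm{Re}(\mu_2) \ = \ -\frac{2\,r}{s\,\rho^{j} M}\bigl[(wa-1)\cos(j\phi) - wb\sin(j\phi)\bigr].
\]

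To match the stated form I use $a=\rho\cos\phi$ and $b=\rho\sin\phi$, hence $\frac{wa-1}{\rho^2}=\frac{w\cos\phi}{\rho}-\frac{1}{\rho^2}$ and $\frac{wb}{\rho^2}=\frac{w\sin\phi}{\rho}$. Substituting and grouping the two terms that carry a factor $w$ produces a combination $-\frac{e^{va}w}{\rho}\bigl[\cos\phi\cos\theta-\sin\phi\sin\theta\bigr]$ which, by the cosine addition formula, equals $-\frac{e^{va}w}{\rho}\cos((j+2)\phi-vb)$. The analogous manipulation of the $r$-part yields $rw\cos((j+1)\phi)$. Negating the total and pulling out the common factor $\rho^{-(j-1)}$ yields exactly the three-line right-hand side of the lemma.

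The main obstacle is purely bookkeeping: tracking signs through the complex multiplication and the $e^{-i(j+1)\phi}$ conjugation, rationalizing $(wz_0-1)^{-1}$ consistently, and then recognizing the cosine-of-a-sum pattern that collapses $(wa-1)\cos\theta - wb\sin\theta$ into the pair of shifted cosines that appear in the statement. No new analytic input beyond Claim~2 and elementary trigonometric identities is required.
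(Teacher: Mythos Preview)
Your proof is correct and follows essentially the same route as the paper: both compute $\mu+\overline{\mu}=2\,\mathrm{Re}(\mu)$ by expressing $z_0$ in polar form, expanding $e^{vz_0}$ and $z_0^{-(j+1)}$, and collapsing the resulting products via the cosine addition formula. Your organization (rationalizing $(wz_0-1)^{-1}$ first and splitting $\mu=\mu_1+\mu_2$) is slightly cleaner than the paper's version, which combines $\mu$ and $\overline{\mu}$ over a common denominator before extracting real parts, but the underlying computation is the same.
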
  
%
\begin{proof}
Using $z_0=a + b \, i$ we obtain
\begin{eqnarray*}
\mu + \overline{\mu} \ &=& \ \frac{e^{v z_0} - r z_0}{s (w z_0 - 1) \,  z_0^{j+1}} \ + 
                 \ \frac{e^{v \overline{z_0}} - r \overline{z_0}}{s (w \overline{z_0} - 1) \,  \overline{z_0}^{j+1}} \\
       &=& \ \frac{1}{s}  \   \frac{w e^{v z_0}  \, \overline{z_0}^{j+2} -  e^{v z_0}  \, \overline{z_0}^{j+1}  - r w  \big( a^2 + b^2  \big) \,  \overline{z_0}^{j+1}  
               + r \big( a^2 + b^2  \big) \,  \overline{z_0}^{j}  }{\Big((w a -1)^2 + w^2 b^2 \Big) \, \big( a^2 + b^2  \big)^{j+1}} \\
          &+& \ 
\frac{1}{s}  \   \frac{w e^{v \overline{z_0}}  \, {z_0}^{j+2} -  e^{v \overline{z_0}}  \, {z_0}^{j+1}  - r w  \big( a^2 + b^2  \big) \,  {z_0}^{j+1}  
               + r \big( a^2 + b^2  \big) \,  {z_0}^{j}  }{\Big((w a -1)^2 + w^2 b^2 \Big) \, \big( a^2 + b^2  \big)^{j+1}}.
\end{eqnarray*}
With $e^{v z_0}=e^{va} \, (\cos(v b) + \sin(v b) \, i)$ and
$z_0^j = \rho^j  (\cos(j \phi) + \sin(j \phi) \, i)$ one gets 
\begin{eqnarray*}
\mbox{Re}\big(e^{v {z_0}}  \, \overline{z_0}^{j+2} \big) \ &=& \ \mbox{Re}\big(e^{va} (\cos(vb)+\sin(vb) \, i) \, \rho^{j+2}(\cos((j+2)\phi )
    - \sin((j+2)\phi) \, i \big)   \\
       &=& \ e^{va} \rho^{j+2} \big( \cos(vb) \cos((j+2)\phi) + \sin(vb) \sin((j+2)\phi)     \big)  \\
       &=& \ e^{va} \rho^{j+2} \cos(vb - (j+2)\phi),
\end{eqnarray*}
and substituting $a^2+b^2=\rho^2$  and $z + \overline{z} = 2 \mbox{Re}(z)$ shows that $\mu + \overline{\mu}$ equals
\begin{eqnarray*}
 \frac{2}{s}  \   \frac{w e^{va} \rho^{j+2}\cos(vb-(j+2)\phi)
   - e^{va}\rho^{j+1}\cos(vb-(j+1)\phi)-rw\rho^{j+3}\cos((j+1)\phi) + r\rho^{j+2}\cos(j\phi)}{\big((w a -1)^2 + w^2 b^2 \big) \, \rho^{2{j+2}}}.
\end{eqnarray*}
\end{proof}

The sign of $- (\mu + \overline{\mu})$ in Lemma~\ref{ressum-lem} is determined by the four cosine terms.  If we substitute $j$ with a real ``time'' variable $t$, we can consider them as sine waves of the same frequency, $\phi$, but different amplitudes and phases. A finite sum of such waves is again a sine wave of frequency $\phi$, so that
\begin{eqnarray}
   \Big(\  \frac{e^{va}}{\rho^2}\, \cos((t+1)\phi - vb) \, &-& \, \frac{e^{va}w}{\rho}\, 
                                                                \cos((t+2)\phi -vb)   \label{strongterm} \\
                                                          - \ \big(  \ \frac{r}{\rho} \, \cos(t \phi) \, &-&  \ rw \, \cos((t+1)\phi)   \ \big) \  \Big) 
                                                                        \label{weakterm}\\
                                                                &=& \ L \, \sin(t\phi + p)  \label{ampphase}
\end{eqnarray}
holds, with some amplitude $L$ and some phase $p$. 

\begin{lemma}   \label{sine-lem}
We have
\begin{eqnarray}
   L \ \ge \ L_0 \, := \, \sqrt{w^2 \rho^2 \, - \, 2w \rho \cos(\phi) +1} \, \Big( \frac{e^{va}}{\rho^2} \, - \, \frac{r}{\rho}   \Big).    \label{L0}
\end{eqnarray}
\end{lemma}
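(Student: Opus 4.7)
The plan is to decompose the four cosine terms on the left-hand side of~\ref{ampphase} into a ``strong'' half (the two $e^{va}$-terms on line~\ref{strongterm}) and a ``weak'' half (the two $r$-terms on line~\ref{weakterm}), express each half as a single sinusoid of frequency~$\phi$ with a cleanly factorable amplitude, and then bound $L$ below by the reverse triangle inequality for phasors sharing a common frequency.

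First I would attack the strong half. Setting $\theta := (t+1)\phi - vb$ and pulling the common factor $e^{va}/\rho^2$ out front, it becomes $\frac{e^{va}}{\rho^2}\bigl(\cos\theta - w\rho\cos(\theta+\phi)\bigr)$. Expanding $\cos(\theta+\phi)$ by the addition formula rewrites the bracket as $(1-w\rho\cos\phi)\cos\theta + w\rho\sin\phi\,\sin\theta$, which is a sinusoid in $t$ at frequency~$\phi$; its amplitude equals $\sqrt{(1-w\rho\cos\phi)^2 + w^2\rho^2\sin^2\phi} = \sqrt{w^2\rho^2 - 2w\rho\cos\phi + 1}$, and this is exactly the quantity already computed in~\ref{wurz}. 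Hence the strong half equals $A_S\sin(t\phi + q_S)$ with $A_S := \frac{e^{va}}{\rho^2}\sqrt{w^2\rho^2 - 2w\rho\cos\phi + 1}$ for some phase $q_S$.

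I would then apply the same manipulation to the expression $\frac{r}{\rho}\cos(t\phi) - rw\cos((t+1)\phi)$ appearing in the weak half: with $\theta' := t\phi$ it reads $\frac{r}{\rho}\bigl(\cos\theta' - w\rho\cos(\theta'+\phi)\bigr)$, and the identical computation yields a sinusoid of amplitude $A_W := \frac{r}{\rho}\sqrt{w^2\rho^2 - 2w\rho\cos\phi + 1}$. The crucial observation here is that the geometric factor $\sqrt{w^2\rho^2 - 2w\rho\cos\phi + 1}$ is shared by both halves, so the two amplitudes differ only in the scalars $e^{va}/\rho^2$ and $r/\rho$.

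Finally, since the left-hand side of~\ref{ampphase} is a linear combination of two sinusoids at the common frequency~$\phi$, its amplitude $L$ is at least $|A_S - A_W|$: view each summand as the imaginary part of a complex phasor of the corresponding magnitude, and apply the reverse triangle inequality in the complex plane. Factoring out the common square root, $|A_S - A_W| = \sqrt{w^2\rho^2 - 2w\rho\cos\phi + 1}\,\bigl|e^{va}/\rho^2 - r/\rho\bigr|$, which equals $L_0$ whenever $e^{va}/\rho^2 \geq r/\rho$; in the opposite case $L_0 \leq 0$ and the inequality $L \geq L_0$ is automatic. The only ``obstacle'' is really bookkeeping: one must check that both halves factor through the same geometric quantity $\sqrt{w^2\rho^2 - 2w\rho\cos\phi + 1}$, which is precisely what produces the clean product form defining $L_0$.
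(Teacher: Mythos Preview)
Your proof is correct and follows essentially the same approach as the paper's: split into the strong ($e^{va}$) and weak ($r$) halves, show each is a sinusoid in $t\phi$ with amplitude equal to the common factor $\sqrt{w^2\rho^2 - 2w\rho\cos\phi + 1}$ times $e^{va}/\rho^2$ (resp.\ $r/\rho$), and then bound the combined amplitude below by the difference. The only cosmetic distinction is that you expand $\cos(\theta+\phi)$ directly and invoke the reverse triangle inequality for phasors, whereas the paper converts the cosines to shifted sines, applies the textbook sum-of-sinusoids formula $\sqrt{a_1^2+a_2^2+2a_1a_2\cos(p_1-p_2)}$ twice, and then bounds $\cos(p-q)\ge -1$ in the final combination---which is exactly your reverse triangle inequality unpacked.
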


\begin{proof}
In general, one has
\[
   a_1 \, \sin(t \phi + p_1) \ + \  a_2 \, \sin(t \phi + p_2) \ = \ \sqrt{a^2_1 + a^2_2 + 2 a_1 a_2 \cos(p_1-p_2)} \, \sin(t \phi +p),
\]
where phase $p$ depends on $a_1,a_2,p_1,p_2$.\, These formulae for the sum of two waves of identical frequency can be found
in textbooks or, for example, in \emph{Bronstein et al., Taschenbuch der Mathematik, 1993}. 
This yields
\begin{eqnarray}
    \frac{e^{va}}{\rho^2}\, \cos((t &+& 1)\phi - vb) \, - \, \frac{e^{va}w}{\rho}\,  \cos((t+2)\phi -vb)  \\
         &=& \, \frac{e^{va}}{\rho^2}\, \sin(t\phi +\phi- vb +\frac{\pi}{2}) \, + \, \frac{e^{va}w}{\rho}\,  \sin(t\phi + 2\phi-vb + \frac{3\pi}{2})  \\
         &=& \, \sqrt{\frac{e^{2va}}{\rho^4}   + \frac{e^{2va}w^2}{\rho^2}   + 2 \frac{e^{2va}w}{\rho^3} \cos(-\phi-\pi)  } \, \sin(t\phi + p) \\
        &=& \, \frac{e^{va}}{\rho^2} \sqrt{w^2 \rho^2 \, -  \, 2w \rho \, \cos(\phi) +1} \, \sin(t\phi + p).   \label{strong}
\end{eqnarray}
Similarly,
\begin{eqnarray}
 - \ \big( \, \frac{r}{\rho} \, \cos(t \phi) \, &-&  \ rw \, \cos((t+1)\phi)  \, \big) \\
    &=& \, \frac{r}{\rho} \sqrt{w^2 \rho^2 \, -  \, 2w \rho \, \cos(\phi) +1} \, \sin(t\phi + q).   \label{weak}
\end{eqnarray}
Thus, the sum of these two sine waves has amplitude
\begin{eqnarray}
\sqrt{w^2 \rho^2 \, -  \, 2w \rho \, \cos(\phi) +1} \, \sqrt{\frac{e^{2va}}{\rho^4}  + \frac{r^2}{\rho^2} + 
       2 \frac{e^{va}}{\rho^2} \frac{r}{\rho} \, \cos(p - q)     }      \label{sumamp} \\
    \geq    \sqrt{w^2 \rho^2 \, -  \, 2w \rho \, \cos(\phi) +1} \,           \Big( \frac{e^{va}}{\rho^2} \, - \,  \frac{r}{\rho}   \Big).
\end{eqnarray}

\end{proof}

Now we can prove a quantitative version of Theorem~\ref{qual-theo} showing that at speed $v>v_c$ the fire fighter is successfull after at most $O(1/\phi)$ many rounds,
where $\phi$ denotes the complex argument of the smallest zero of $e^{wZ}-sZ$.
\begin{theorem}      \label{quant-theo}
Let $\alpha > \alpha_c$.
Then there is an index $j \in  O(\frac{1}{\phi}) $ such that $F_j < 0$ holds.
\end{theorem}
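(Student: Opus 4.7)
The plan is to combine Cauchy's Residue Theorem~\ref{CRT} with Lemmas~\ref{ressum-lem} and~\ref{sine-lem}, and then choose $j = O(1/\phi)$ so that the residue contribution is negative and dominates the error term~\ref{int}. Write
\[
   \frac{F_j}{F_0} \;=\; -(\mu+\overline{\mu}) \;+\; E_j, \qquad |E_j|\le D\gamma^{-j}, \qquad \gamma=0.9.
\]
By Lemma~\ref{ressum-lem} together with Lemma~\ref{sine-lem} and the identity~\ref{wurz}, the residue sum has the form
\[
   -(\mu+\overline{\mu}) \;=\; L\sin(j\phi+p) \cdot \frac{2}{s\bigl((wa-1)^2+w^2b^2\bigr)} \cdot \rho^{-(j-1)}
\]
with $L\ge L_0 > 0$. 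Setting
\[
   A(\alpha) \;:=\; \frac{L_0\cdot 2}{s\bigl((wa-1)^2+w^2b^2\bigr)} \;=\; \frac{2\bigl(e^{va}/\rho^2-r/\rho\bigr)}{s\sqrt{(wa-1)^2+w^2b^2}} \;>\; 0,
\]
we obtain that whenever $\sin(j\phi+p)\le -1/2$ it follows that $-(\mu+\overline{\mu})\le -A(\alpha)/(2\rho^{j-1})$.

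It remains to find $j = O(1/\phi)$ satisfying two conditions simultaneously: (a) $\sin(j\phi+p)\le -1/2$, and (b) $A(\alpha)/(2\rho^{j-1}) > D\gamma^{-j}$, i.e.\ $(\gamma/\rho)^{j-1}>2D/(A(\alpha)\gamma)$. Condition~(a) is handled by equidistribution: the points $j\phi+p\pmod{2\pi}$ advance by $\phi$, and the arc $[7\pi/6,11\pi/6]$ on which $\sin\le -1/2$ has length $2\pi/3$. Hence every block of $\lceil 2\pi/\phi\rceil$ consecutive integers contains some $j$ meeting~(a), and in fact roughly a third of them do. Condition~(b) is handled by Lemma~\ref{FalboApp-lem}: since $\rho<0.31<0.9=\gamma$ we have $\gamma/\rho>2.9$, so~(b) reduces to $j\ge j_0(\alpha)$ for some threshold $j_0(\alpha)=O(\log(1/A(\alpha)))$ depending only on $\alpha$.

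To finish, pick $j^*$ in $[\,j_0(\alpha),\,j_0(\alpha)+\lceil 2\pi/\phi\rceil\,]$ satisfying~(a); then~(b) holds too, and therefore $F_{j^*}<0$. The main subtlety is the uniformity of $j_0(\alpha)$ as $\alpha$ ranges over $(\alpha_c,\pi/2]$. Fortunately, as $\alpha\to\alpha_c$ the pair $z_0,\overline{z_0}$ merges into the real pole $1/w$, so $\sqrt{(wa-1)^2+w^2b^2}\to 0$ and $A(\alpha)\to\infty$, forcing $j_0(\alpha)$ down precisely in the regime where $1/\phi$ blows up; for $\alpha$ bounded away from $\alpha_c$, both $A(\alpha)$ and $\phi$ are bounded away from zero and $j_0(\alpha)=O(1)$. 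Either way $j^* = O(1/\phi)$, which yields the theorem.
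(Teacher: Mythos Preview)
Your proof is correct and follows essentially the same approach as the paper's: combine the residue formula~\ref{CRT}, the sine-wave representation of $-(\mu+\overline{\mu})$ from Lemmas~\ref{ressum-lem} and~\ref{sine-lem}, and the bound~\ref{int} on the integral error, then find an integer $j$ within one period $2\pi/\phi$ where the sine is sufficiently negative and the growing factor $\rho^{-j}$ (with $\rho<0.31<0.9$) dominates the error $D\,0.9^{-j}$. The paper picks $j$ within $1/2$ of the minimizer of $\sin(t\phi+p)$ and uses $h(j)\le -L\cos(\phi/2)$, whereas you use the arc $\sin\le -1/2$; both work once one knows $\phi<2\pi/3$ (the paper records $\phi<2.09$). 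Your uniformity discussion---that $A(\alpha)\to\infty$ as $\alpha\to\alpha_c$ because $\sqrt{(wa-1)^2+w^2b^2}\to 0$, while away from $\alpha_c$ everything is continuous and bounded---is the same content as the paper's explicit numerical lower bounds on the coefficients over $[\alpha_c,\pi/2]$.
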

\begin{proof}
The function $h(t) := L \, \sin(t\phi + p)$ of~\ref{ampphase} attains its minima $-L$ at
arguments $t^*$ where  $t^* \phi + p \, \equiv \,  \frac{3 \pi}{2} \mod 2\pi$. For an integer $j$
at most $1/2$ away from~$t^*$ we have
\[
     h(j) \, \leq \,  h(t^*+\frac{1}{2}) \,  = \,  L \, \sin(\frac{3\pi+\phi}{2}) \, = \, - \, L \, \cos(\frac{\phi}{2});
\]
these terms are negative because of $\phi < 2.09< \pi$. This implies
\begin{eqnarray}
\frac{F_j}{F_0} \ < \ - \ \frac{2}{s}  \, \Big( \frac{e^{va}}{\abs{z_0}} \, - \, r \Big) \, \frac{1}{w} \, \frac{1}{\abs{z_0-x_0}} \,  \cos(\frac{\phi}{2}) \, \abs{z_0}^{-j} \ + \ D \, \cdot \,  0.9^{-j},    \label{upbound}
\end{eqnarray}
summarizing~\ref{CRT}, \ref{major} to~\ref{power}, \ref{ampphase}, Lemma~\ref{sine-lem}, and~\ref{int}.
 We observe that such integers~$j$ occur (at least) once in every period of length $2\pi/\phi$ of function $h(t)$. 

Since $\abs{z_0} < 0.31 < 0.9$, the powers $\abs{z_0}^{-j}$ grow in $j$ much faster than $0.9^{-j}$ does.
All coefficients in~\ref{upbound} are positive, and lower bounded by independent constants on 
$[\alpha_c, \pi/2]$.
Indeed, we have $\frac{2}{s} \geq 0.091$ with a minimum at $\alpha_c$, 
$\frac{e^{va}}{\abs{z_0}}  -  r \geq 1.581$ with its minimum at $\alpha=\pi/2$, and
$w \abs{z_0-x_0} \leq 0.33$ with a maximum at $\alpha=\pi/2$.
 
Hence, after a constant number of periods the value of~\ref{upbound} becomes negative. This completes the proof of Theorem~\ref{quant-theo}.
\end{proof}
Numerical inspection shows that a suitable integer $j$ can already be found in the first period of function $h$, so that $j \leq \frac{2\pi}{\phi} +1$. 

\vspace{\baselineskip}
Now we let speed $v$ decrease to the critical value $v_c$, and prove that the first index~$j$,
for which $F_j$ becomes negative, grows with $\pi/\phi$ to infinity. To this end we prove
that the sine wave in~\ref{ampphase} starts, at zero, near the beginning of its positive half-cycle, so that it takes half a period before negative values can occur.

The graph of $\sin(t \phi +p)$ is shifted, along the $t$-axis, by $p/\phi$ to the left, as compared to
the graph of $\sin(t)$. As $\alpha$ tends to $\alpha_c$, frequency $\phi$ goes to zero, and so does
phase $p$.  But, surprisingly, their ratio rapidly converges to a small constant.

\begin{lemma}      \label{shift-lem}
We have
\begin{eqnarray*}
  \sigma \ := \  \lim_{\alpha \rightarrow \alpha_c} \ \frac{p}{\phi} \ =  \ \Big(\frac{r}{e^{\frac{v}{w}} w -r} +1\Big) \,           
\big(1-\frac{v}{w}\big) \ + \ \frac{1}{3} 
      \approx \ 1.351.
 \end{eqnarray*}
\end{lemma}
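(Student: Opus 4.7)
The plan is to re-express the phase $p$ as a sum of arguments of a few concrete complex quantities depending on $z_0$, and then Taylor-expand each at $\phi = 0$. First, I would combine the four cosine waves in \ref{strongterm}--\ref{weakterm} into the real part of a single complex exponential: writing $\cos\theta = \mathrm{Re}(e^{i\theta})$ and factoring $e^{it\phi}$ out of the bracket, the entire expression equals $\mathrm{Re}(M\,e^{it\phi})$ for a fixed complex number $M$. Matching this against $L\sin(t\phi+p)$ then identifies $L = |M|$ and $p \equiv \arg M + \pi/2 \pmod{2\pi}$.

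Second, I would simplify $M$ in closed form. The key observation is that with $z_0 = \rho e^{i\phi} = a + bi$, the first two summands share the factor $e^{va-ivb} = e^{v\overline{z_0}}$ and collapse to a constant multiple of $(wz_0 - 1)$, while the remaining two summands also reduce to a constant multiple of $(wz_0 - 1)$. Pulling this common factor out yields
\[
M \ = \ -\,\frac{z_0\,(wz_0-1)\,(e^{v\overline{z_0}}-r\overline{z_0})}{\rho^3}.
\]
Using conjugation ($\arg(wz_0-1) = -\arg(w\overline{z_0}-1)$ and $\arg(e^{v\overline{z_0}}-r\overline{z_0}) = -\arg A$ where $A := e^{vz_0}-rz_0$), this gives
\[
p \ \equiv \ \phi \,-\, \arg(w\overline{z_0}-1) \,-\, \arg A \,-\, \pi/2 \pmod{2\pi}.
\]

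Third, I would Taylor-expand each of the three pieces in $\phi$ as $\alpha \to \alpha_c$. From \ref{philab}, $b = \phi/w$ exactly, and $a = \rho\cos\phi = (\phi\cot\phi)/w = (1/w)(1 - \phi^2/3 + O(\phi^4))$. Hence $wz_0 - 1 = -\phi^2/3 + i\phi + O(\phi^4)$ lies in the second quadrant with argument $\pi/2 + \phi/3 + O(\phi^3)$, so $\arg(w\overline{z_0}-1) = -\pi/2 - \phi/3 + O(\phi^3)$. For $A$, the limiting value at $\phi = 0$ is $(we^{v/w}-r)/w$, which is real and positive at $\alpha_c$ (verifiable numerically from $e^{v/w}\approx 1.17$ and $r/w\approx 0.20$). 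A straightforward expansion gives $\mathrm{Re}(A) = (we^{v/w}-r)/w + O(\phi^2)$ and $\mathrm{Im}(A) = (\phi/w)(ve^{v/w}-r) + O(\phi^3)$, whence $\arg A = K\phi + O(\phi^3)$ with $K := (ve^{v/w}-r)/(we^{v/w}-r)$. Assembling,
\[
p \ = \ \phi - \bigl(-\pi/2 - \phi/3\bigr) - K\phi - \pi/2 + O(\phi^3) \ = \ \bigl(\tfrac{4}{3} - K\bigr)\phi + O(\phi^3),
\]
so $\sigma = 4/3 - K$. The elementary identity $4/3 - K = 1/3 + (w-v)e^{v/w}/(we^{v/w}-r)$, together with $(w-v)e^{v/w} = we^{v/w}(1 - v/w)$ and $we^{v/w} = r + (we^{v/w}-r)$, rewrites this as $\bigl(\tfrac{r}{e^{v/w}w-r}+1\bigr)(1-v/w) + 1/3$, exactly as claimed.

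The main obstacle is the algebraic collapse of the four-term expression for $M$ into the clean product of three factors: pairing the correct summands, recognizing the shared factor $wz_0 - 1$, and using the identity $e^{va-ivb}=e^{v\overline{z_0}}$ requires some care. Once this step is done, the remainder is a routine first-order Taylor expansion around a regular point, together with elementary manipulations of $\arctan$. A secondary check one should perform is that $\mathrm{Re}(A)$ does not vanish at $\alpha_c$, so that $\arg A$ really is a small $O(\phi)$ perturbation of $0$ and the linear approximation above is legitimate.
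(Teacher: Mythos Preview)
Your argument is correct and reaches the same limit, but it takes a genuinely different route from the paper's proof. The paper pairs the four waves two at a time using the real addition formula $a_1\sin(t\phi+p_1)+a_2\sin(t\phi+p_2)=a_3\sin(t\phi+p_3)$, obtains intermediate phases $p_{\ref{strongterm}}=\tau+\phi-vb$ and $p_{\ref{weakterm}}=\pi/2+\gamma+\tau+\phi$, and then invokes a separate geometric lemma (Lemma~\ref{tria-lem}) about a specific triangle to show $\tau/\phi\to 1/3$ and $\gamma\to\pi/2$; the constant $1/3$ appears as the limiting ratio of two angles in that triangle. Your approach instead passes to the complex amplitude $M$ and factors it cleanly as $-z_0(wz_0-1)(e^{v\overline{z_0}}-r\overline{z_0})/\rho^3$, so that $p$ becomes a sum of three arguments; the $1/3$ then drops out of the expansion $\phi\cot\phi=1-\phi^2/3+O(\phi^4)$ inside $\arg(wz_0-1)$. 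Your route is shorter, bypasses the auxiliary triangle lemma entirely, and makes the algebraic source of each term in $\sigma$ transparent; the paper's route trades that economy for a geometric picture of where the $1/3$ comes from. One small point worth making explicit in your write-up is the branch selection: you work modulo $2\pi$ throughout, and it is the continuity of $p$ in $\phi$ (with $p\to 0$ as $\phi\to 0$) that singles out the representative $(4/3-K)\phi$ rather than a shift by $2\pi$.
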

Figure~\ref{wave-fig} shows $-(\mu + \overline{\mu}) \rho^{j-1}$ as a
function of time parameter $j=t$; see ~\ref{major} to~\ref{power}.
\begin{figure}
\begin{center}
\includegraphics[scale=0.4]{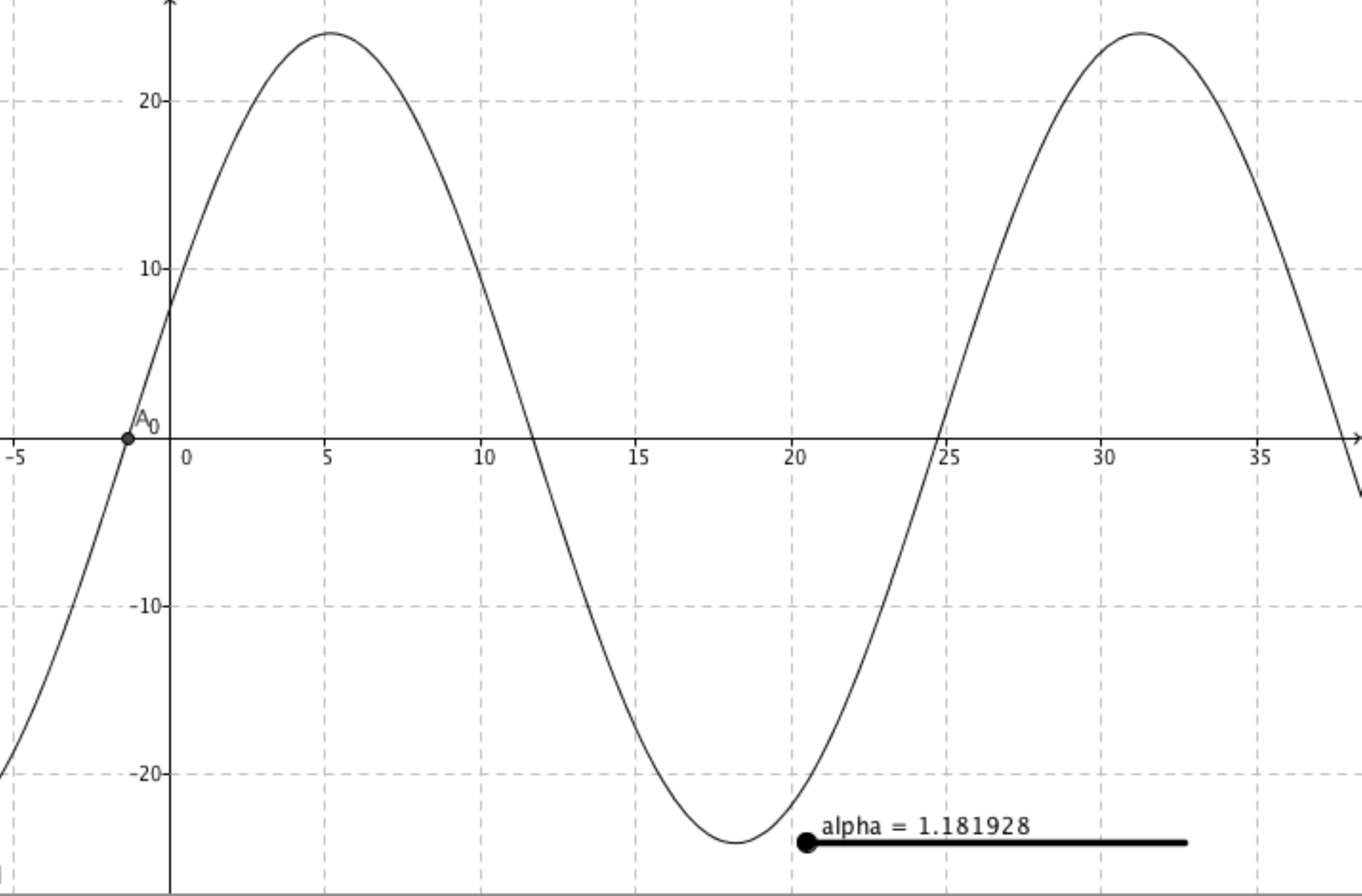}
\caption{The shift to the left is almost constant, as $\alpha$ tends to $\alpha_c$ 
and period $2 \pi/\phi$ goes to infinity.}
\label{wave-fig}
\end{center}
\end{figure}
Crucial in the proof is the following geometric fact.
\begin{lemma}      \label{tria-lem}
Consider the triangle shown in Figure~\ref{fulltria-fig}, which has a base of length~1, a base angle
of $\phi$, and height $\phi$. As $\phi$ goes to zero, 
the ratio $\frac{\tau}{\phi}$ tends to $1/3$, and $\gamma$ converges to $\pi/2$.
\end{lemma}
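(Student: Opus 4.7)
The plan is to introduce coordinates that make the triangle explicit, read off $\gamma$ (and the small angle $\tau$) as elementary trigonometric expressions in $\phi$, and then extract both claimed limits by a single Taylor expansion around $\phi=0$.

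First I would place the base on the $x$-axis with the $\phi$-corner at the origin: set $A=(0,0)$ and $B=(1,0)$. Since the base angle at $A$ equals $\phi$ and the height above $AB$ equals $\phi$, the apex $C$ lies on the ray from $A$ of inclination $\phi$, at height $\phi$; thus $C=(\phi\cot\phi,\phi)$. The other base angle $\gamma=\angle ABC$ is then read off the right sub-triangle with horizontal leg $1-\phi\cot\phi$ and vertical leg $\phi$:
\[
\tan\gamma \;=\; \frac{\phi}{\,1-\phi\cot\phi\,}.
\]

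Next I would expand the denominator. Writing $1-\phi\cot\phi = (\sin\phi-\phi\cos\phi)/\sin\phi$ and using
\[
\sin\phi-\phi\cos\phi \;=\; \bigl(\phi-\tfrac{\phi^3}{6}+O(\phi^5)\bigr)-\bigl(\phi-\tfrac{\phi^3}{2}+O(\phi^5)\bigr) \;=\; \tfrac{\phi^3}{3}+O(\phi^5),
\]
one obtains $1-\phi\cot\phi = \phi^2/3 + O(\phi^4)$. Hence $\tan\gamma = 3/\phi + O(\phi)\to\infty$ as $\phi\to 0$, which forces $\gamma\to\pi/2$. For the complementary small angle $\tau=\pi/2-\gamma$ (the angle at $C$ between the altitude and the side $CB$, or equivalently the angle by which the line $BC$ deviates from vertical), one then has
\[
\tan\tau \;=\; \cot\gamma \;=\; \frac{1-\phi\cot\phi}{\phi} \;=\; \frac{\phi}{3}+O(\phi^3),
\]
so $\tau=\phi/3+O(\phi^3)$ and therefore $\tau/\phi\to 1/3$.

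The whole argument thus reduces to the identity $\sin\phi-\phi\cos\phi = \phi^3/3+O(\phi^5)$, which is what produces the factor $1/3$ in both limits. The only genuinely non-routine point is matching the symbol $\tau$ in Figure~\ref{fulltria-fig} to the complementary angle $\pi/2-\gamma$; once this identification is made, no real obstacle remains, since everything else is a one-line Taylor expansion. If $\tau$ were instead labelled as, say, a corresponding small length or as $\gamma-\tan\gamma$-type correction, an analogous expansion of the same numerator $\sin\phi-\phi\cos\phi=\phi^3/3+O(\phi^5)$ would yield the constant $1/3$ in exactly the same way.
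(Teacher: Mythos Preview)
Your argument is correct and, for the main claim $\tau/\phi\to 1/3$, essentially identical to the paper's: both obtain $\tan\tau=(\sin\phi-\phi\cos\phi)/(\phi\sin\phi)$ from the right sub-triangle cut off by the altitude and then extract the limit (the paper via two applications of l'H\^opital, you via the Taylor expansion $\sin\phi-\phi\cos\phi=\phi^3/3+O(\phi^5)$). Your identification of $\tau$ as the apex angle between the altitude and side $CB$ matches the paper exactly.

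One small correction on the labelling you flagged as uncertain: in the figure, $\gamma$ is \emph{not} the base angle at $B$ but the \emph{other} apex angle, between the altitude and side $CA$. Hence $\gamma=\pi/2-\phi$ exactly, $\gamma+\tau$ is the full apex angle (which is how the paper later uses it, via the law of sines $\sin\phi/C=\sin(\gamma+\tau)$), and $\gamma\to\pi/2$ is immediate. Your interpretation gives $\gamma=\pi/2-\tau$, which also tends to $\pi/2$, so the stated conclusion survives either way; but for consistency with the subsequent proof of Lemma~\ref{shift-lem} you should adopt the paper's labelling.
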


\begin{figure}
\begin{center}
\includegraphics[scale=0.5]{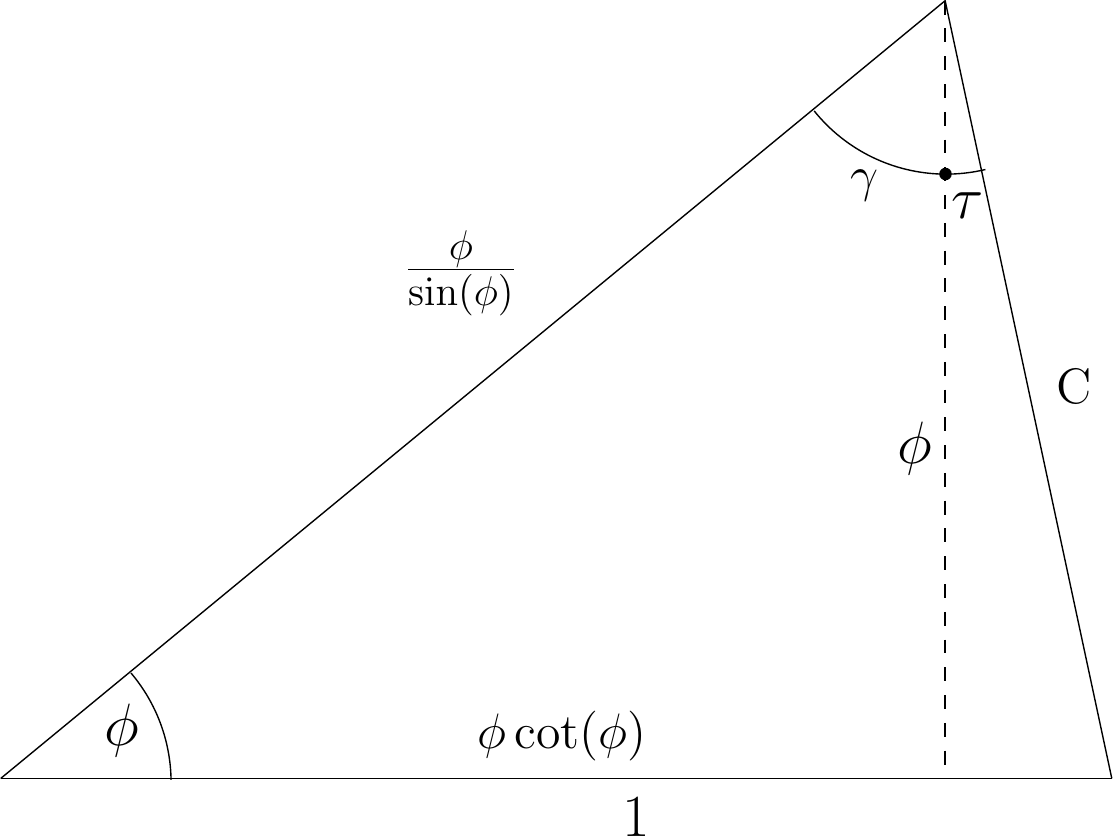}
\caption{Ratio $\tau / \phi$ tends to $1/3$, as $\phi$ goes to zero.}
\label{fulltria-fig}
\end{center}
\end{figure}
\begin{proof}
From $C \cos\tau = \phi$ and
$C \sin\tau = 1-\phi \cot\phi$ we obtain
\[
    \tan\tau \, = \, \frac{\sin\phi - \phi \cos\phi}{\phi \sin\phi},
\]
and because $\tau$ must go to zero as $\phi$ does, we have
\[
     \frac{\tau}{\phi} \, = \, \cos\tau \, \frac{\tau}{\sin\tau} \,  \frac{\tan\tau}{\phi}
         \, \sim \, \frac{\sin\phi - \phi \cos\phi}{\phi^2 \sin\phi}.
\]
A twofold application of  l'Hospi\-tal's rule shows that the last term has the same limit as
\[
   \frac{\sin\phi}{2 \sin\phi + \phi \cos\phi} \, \sim \, \frac{\cos\phi}{3 \cos\phi - \phi \sin\phi},
\]
which converges to $1/3$.

Moreover, we have 
\[
   \sin\gamma \, = \, \sin(\pi/2 - \phi) \, = \, \cos(\phi)
\]
which tends to~1, so that $\gamma$ converges to $\pi/2$.

\end{proof}

Now we give the proof of Lemma~\ref{shift-lem}.
\begin{proof}
As in the proof of Lemma~\ref{sine-lem} one generally has
\[
   a_1 \, \sin(t \phi + p_1) \ + \  a_2 \, \sin(t \phi + p_2) \ = \  a_3 \, \sin(t \phi +p_3),
\]
where the new amplitude, $a_3$, is given by
\[
    a_3 \ = \ \sqrt{a^2_1 + a^2_2 + 2 a_1 a_2 \cos(p_1-p_2)},
\]
and the new phase, $p_3$, fulfills
\[
  p_3 \ = \   \arcsin(\frac{a_2 \, \sin(p_2-p_1)}{a_3}) \ + \ p_1.
\]
First, we are applying this formula to~\ref{strongterm}, 
\begin{eqnarray*}
   \frac{e^{va}}{\rho^2}\, \cos((t+1)\phi - vb) \, &-& \, \frac{e^{va}w}{\rho}\, 
                                                                \cos((t+2)\phi -vb)    \\
    = \ \frac{e^{va}}{\rho^2}\, \sin(t\phi + \phi - vb + \pi/2) \, &+& \, \frac{e^{va}w}{\rho}\, 
                                                                \sin(t\phi + 2\phi -vb + 3\pi/2) 
\end{eqnarray*}
and obtain, as in~\ref{strong},
\[
   a_{\ref{strongterm}} \ = \ \frac{e^{va}}{\rho^2} \sqrt{w^2 \rho^2 \, -  \, 2w \rho \, \cos(\phi) +1},
\]
and, for the new phase,
\begin{eqnarray*}
   p_{\ref{strongterm}} \ &=&  \ \arcsin(-\frac{w \rho \sin\phi}{\sqrt{w^2 \rho^2 - 2w \rho \cos\phi +1}}) \ + \ \phi - vb + \pi/2  \\
     &=& \  \arcsin(-\frac{\phi}{\sqrt{\frac{\phi^2}{\sin^2\phi} -2\phi \frac{\cos\phi}{\sin\phi} +1}}) \ + \ \phi - vb + \pi/2  \\
     &=& \  \arcsin( - \cos\tau)  \ + \ \phi - vb + \pi/2   \\
     &=& \  \arcsin(\sin(-\pi/2 + \tau))  \ + \ \phi - vb + \pi/2  \\
     &=& \   \tau \, + \, \phi - vb
\end{eqnarray*}
using~\ref{philab} and the triangle in Figure~\ref{fulltria-fig}.
We conclude that the value of $\arcsin$ goes to $-\pi/2$, so that $p_{\ref{strongterm}}$ converges to zero.
For the resulting shift we obtain
\begin{eqnarray*}
\frac{p_{\ref{strongterm}}}{\phi} \ &=& 
       \ \frac{\tau}{\phi} \ + \ 1 \ - \frac{v}{w}   \\
       &\rightarrow& \ \frac{1}{3} \, + \, 1 - \frac{v}{w}
\end{eqnarray*}
by Lemma~\ref{tria-lem}.

\vspace{\baselineskip}
Next, we consider~\ref{weakterm},
\begin{eqnarray*}
 \ rw \, \cos((t+1)\phi) \,  &-&  \,   \frac{r}{\rho} \, \cos(t \phi)  \\
     &=& \ rw \, \sin(t \phi +\phi + \pi/2) \, + \, \frac{r}{\rho} \, \sin(t \phi + 3\pi/2)
\end{eqnarray*}
As in~\ref{weak},
\[
     a_{\ref{weakterm}} \    = \, \frac{r}{\rho} \sqrt{w^2 \rho^2 \, -  \, 2w \rho \, \cos(\phi) +1},
\]
and for the phase,
\begin{eqnarray*}
   p_{\ref{weakterm}} \ &=&  \ \arcsin\Big(\frac{\sin\phi}{\sqrt{w^2 \rho^2 - 2w \rho \cos\phi +1}}\Big) \ + \ \phi  + \pi/2  \\
           &=& \ \pi/2 \, + \, \gamma \, + \, \tau \, + \, \phi,
\end{eqnarray*}
observing that the argument of $\arcsin$ equals
\[
    \frac{\sin\phi}{C} \, = \, \frac{\sin(\gamma+\tau)}{1},
\]
applying the law of sines to the triangle shown in Figure~\ref{fulltria-fig}.
We see that $p_{\ref{weakterm}}$ goes to $\pi$; in this case, the shift does not converge.

\vspace{\baselineskip}
Now we consider the sum of~\ref{strongterm} and \ref{weakterm}. We know from~\ref{sumamp} the final amplitude,
\begin{eqnarray*}
a \ = \ \sqrt{w^2 \rho^2 \, -  \, 2w \rho \, \cos(\phi) +1} \, \sqrt{\frac{e^{2va}}{\rho^4}  + \frac{r^2}{\rho^2} + 
       2 \frac{e^{va}}{\rho^2} \frac{r}{\rho} \, \cos( p_{\ref{weakterm}} -  p_{\ref{strongterm}} )     },
\end{eqnarray*}
and obtain for the phase
\begin{eqnarray*}
 p \ = \  \arcsin\Biggl( \frac{r}{\rho} \, \frac{\sin(p_{\ref{weakterm}} - p_{\ref{strongterm}} )}{\sqrt{\frac{e^{2va}}{\rho^4}  + \frac{r^2}{\rho^2} + 
       2 \frac{e^{va}}{\rho^2} \frac{r}{\rho} \, \cos( p_{\ref{weakterm}} -  p_{\ref{strongterm}}) }} \Biggr) \ + \ p_{\ref{strongterm}}.
\end{eqnarray*}
For short, let $\overline{p}$ denote the $\arcsin$ term, and 
let $R$ be the square root in the denominator.
Since $p_{\ref{weakterm}} - p_{\ref{strongterm}}$ tends to $\pi$ we conclude that $\overline{p}$ goes to zero. Thus, we obtain
\begin{eqnarray*}
    \frac{p}{\phi} \ &=& \ \frac{\overline{p}}{\sin \overline{p}} \, \frac{\sin \overline{p}}{\phi}  \, + \,           \frac{p_{\ref{strongterm}}}{\phi} \\
          \ &=& \ \frac{\overline{p}}{\sin \overline{p}} \,  \frac{r}{\rho} \, \frac{1}{R} \,   \frac{\sin(p_{\ref{weakterm}} 
       - p_{\ref{strongterm}} )}{\pi - (p_{\ref{weakterm}} - p_{\ref{strongterm}} )} \ \cdot \ \frac{\pi - (p_{\ref{weakterm}} - p_{\ref{strongterm}} )}{\phi} \ + \ \frac{p_{\ref{strongterm}} }{\phi}  \\
     &\sim& \  \frac{r}{\rho} \, \frac{1}{R} \ \cdot \ \frac{\pi/2 - \gamma - vb}{\phi} \ + \ \frac{p_{\ref{strongterm}} }{\phi} \\
     &=& \  \frac{r}{\rho} \, \frac{1}{R} \ \cdot \ \frac{\phi - vb}{\phi} \ + \ \frac{p_{\ref{strongterm}} }{\phi} \\
     &\rightarrow& \ \frac{rw}{e^{v/w}w^2 - rw}\, \Big(1-\frac{v}{w}\Big) \ + \ 1/3 + 1 - \frac{v}{w}.
\end{eqnarray*}
This concludes the proof of Lemma~\ref{shift-lem}.
\end{proof}

\vspace{\baselineskip}
Now let $j$ be an integer satisfying
\[
      j \, \leq \, \frac{\pi}{\phi} \, - \, 2 \, \sigma.
\]
Lemma~\ref{shift-lem} implies that the sign of $\sin(j \phi +p)$ in~\ref{ampphase} is positive,
and, even more,  that 
\begin{eqnarray}
   \sin(j \phi +p) \, > \, \sin(p) > \sin(1.35 \, \phi)
\end{eqnarray}
holds. For such integers $j$ we obtain, similarly to~\ref{upbound}, 
\begin{eqnarray}
\frac{F_j}{F_0} \ > \  \frac{2}{s}  \, \Big( \frac{e^{va}}{\abs{z_0}} \, - \, r \Big) \, \frac{1}{w} \,
\frac{1}{\abs{z_0-x_0}} \,  \sin(1.35 \, \phi) \, \abs{z_0}^{-j} \ - \ 0.1 \, \cdot \,  0.9^{-j}    \label{lobound}
\end{eqnarray}
Here we have used the following estimate for $D$ in~\ref{int}.
\begin{lemma}    \label{D-lem}
With the radius of $\Gamma$ equal to $\gamma=0.9$, we have $\abs{f(z)} \leq 0.1$
for all $z$ on $\Gamma$, if $\alpha$ is close enough to $\alpha_c$.
\end{lemma}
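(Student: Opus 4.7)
The plan is to reduce the estimate to a numerical verification at $\alpha = \alpha_c$ via a continuity argument, then to bound $|f(z)|$ on $\Gamma$ by splitting the circle according to the sign of $\textrm{Re}(z)$.

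By Claim~4, the only zeros of the denominator $e^{wZ}-sZ$ that lie inside $\Gamma$ are the conjugate pair $z_0(\alpha),\overline{z_0(\alpha)}$ of modulus below $0.31$, and all other zeros have modulus at least~$1$. Consequently, for $\alpha$ in a small one-sided neighborhood $[\alpha_c,\alpha_c+\delta]$, no pole of $f$ touches $\Gamma$, and the map $(z,\alpha)\mapsto f(z)$ is jointly continuous on the compact set $\Gamma\times[\alpha_c,\alpha_c+\delta]$. Hence it suffices to establish the \emph{strict} inequality $\max_{z\in\Gamma}|f(z)| < 0.1$ at $\alpha = \alpha_c$; continuity then lifts it (with a small safety margin) to a neighborhood of $\alpha_c$.

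At $\alpha_c$, the formulas~\ref{sw} give $v\approx 1.275$, $w\approx 8.07$, $r\approx 1.63$, $s\approx 21.96$; note that $w\gg v$ and $s\gg r$. Parametrize $z=0.9\,e^{i\theta}=x+iy$; by the conjugation symmetry $f(\overline{z})=\overline{f(z)}$, it is enough to consider $\theta\in[0,\pi]$. Split by the sign of $x$. On the left half ($x\le 0$) one has $|e^{wz}|=e^{wx}\le 1$, so the triangle inequality yields $|e^{wz}-sz|\ge 0.9\,s-e^{wx}\ge 0.9\,s-1>18$, while $|e^{vz}-rz|\le e^{vx}+0.9\,r$, which is monotonically decreasing in $|x|$. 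On the right half ($x>0$), $|e^{wz}|=e^{wx}$ grows rapidly; once $e^{wx}\ge 2\cdot 0.9\,s$ (which occurs by $x\approx 0.45$) the denominator dominates the numerator by a wide margin, and the ratio tends to zero as $x\to 0.9$.

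The main obstacle is the transition band, namely the region of small $|x|$, where the crude triangle inequality yields only $|f(z)|\lesssim 0.13$ (exceeding the target), together with the narrow zone of small positive $x$ where $|e^{wz}|$ and $|sz|$ have comparable modulus and could partially cancel. In the first region one must exploit phase cancellation inside the numerator itself: for instance at $z=0.9\,i$ direct computation gives $|e^{vz}-rz|\approx 0.69$ rather than the loose bound $2.46$, because the imaginary parts of $e^{ivy}$ and $ivy\,r$ nearly cancel. In the second region one uses that the phases $wy$ of $e^{wz}$ and $\arg(sz)=\arg(z)$ generically differ, preventing destructive cancellation below a safe threshold. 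A finite numerical subdivision of $\Gamma$ into sufficiently short arcs, combined on each arc with tight local upper bounds for $|e^{vz}-rz|$ and lower bounds for $|e^{wz}-sz|$, shows that the global maximum of $|f|$ on $\Gamma$ is attained near $z=-0.9$ with value $\approx 0.09<0.1$. Continuity in $\alpha$ then concludes the proof.
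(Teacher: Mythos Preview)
Your argument is correct and reaches the same endpoint as the paper---the maximum of $|f|$ on $\Gamma$ sits at $z=-0.9$ with value about $0.09$ at $\alpha=\alpha_c$, and continuity in $\alpha$ does the rest---but the route differs. The paper does not split $\Gamma$ into regions. It simply writes, via multiplication by conjugates,
\[
|f(\gamma e^{i\psi})|^2 \;=\; \frac{e^{2v\gamma\cos\psi}-2e^{v\gamma\cos\psi}r\gamma\cos(v\gamma\sin\psi-\psi)+r^2\gamma^2}{e^{2w\gamma\cos\psi}-2e^{w\gamma\cos\psi}s\gamma\cos(w\gamma\sin\psi-\psi)+s^2\gamma^2},
\]
and then asserts (numerically) that this one-variable function of $\psi$ is maximized at $\psi=\pi$, with value $0.09\ldots$ when $\alpha=\alpha_c$, growing monotonically to about $1.27$ as $\alpha\to\pi/2$. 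Your decomposition into left half, right half, and transition band gives more geometric intuition about \emph{why} the maximum lands at $\psi=\pi$ (large~$s$ dominates when $e^{wx}$ is small), but---as you yourself note---the transition band still forces a finite numerical check, so in the end both arguments bottom out in the same kind of direct evaluation. The paper's version is just more compressed: one explicit formula, one numerical claim.
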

\begin{proof}
Let $z=\gamma \, e^{i \psi}$ be a parameterization of circle $\Gamma$ for $\psi \in [0 \ldots 2\pi]$. By multiplication with complex conjugates,
\begin{eqnarray*}
    \abs{f(z)}_{z \in \Gamma} \ &=& \ \abs{\frac{ e^{v \gamma e^{i \psi}} \ - \ r \, \gamma e^{i \psi}}{ e^{w \gamma e^{i \psi}} \ -  
        \ s \, \gamma e^{i \psi}}  } \\
              &=& \sqrt{\frac{  e^{2v \gamma \cos\psi} - 2 e^{v \gamma \cos\psi} \, r \gamma \, \cos(v \gamma \sin\psi - \psi)  + r^2 \gamma^2}             {e^{2w \gamma \cos\psi} - 2 e^{w \gamma \cos\psi} \, s \gamma \, \cos(w \gamma \sin\psi - \psi)  + s^2 \gamma^2}                      }.
\end{eqnarray*}
The maximum is attained at $\psi=\pi$, and it grows monotonically from $0.09\ldots$ for $\alpha=\alpha_c$ to  $1.269\ldots$ for 
$\alpha = \pi/2$.
\end{proof}

Now we can state the lower bound.
\begin{theorem}    \label{lowbo-theo}
As angle $\alpha$ decreases to the critical value $\alpha_c$, the number $j$ of rounds necessary to contain 
the fire is at least $j > \frac{\pi}{\phi} - 2.71$. This lower bound grows to infinity.
\end{theorem}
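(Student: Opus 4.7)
The plan is to prove the contrapositive: show that for every integer $j$ in $[0,\pi/\phi - 2.71]$, the coefficient $F_j$ is strictly positive whenever $\alpha$ is close enough to $\alpha_c$. By Lemma~\ref{suff1-lem} this means the barrier has not yet enclosed the fire after round $j$, so the first round at which containment occurs must exceed $\pi/\phi - 2.71$, a quantity that tends to infinity as $\phi \to 0$.

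First I would invoke Lemma~\ref{shift-lem}. Since $\sigma \approx 1.351$ and $2\sigma < 2.71$, there is a neighborhood of $\alpha_c$ on which the ratio $p/\phi$ is so close to $\sigma$ that $2p/\phi$ stays below $2.71$; in particular $p$ exceeds $1.35\phi$ there. For every integer $j$ satisfying $0\leq j\leq \pi/\phi - 2.71$, I would then observe that the argument $j\phi+p$ of the sine wave in~\ref{ampphase} lies in the interval $[p,\pi - p]\subset(0,\pi)$: the lower bound is immediate from $j\geq 0$, and the upper bound follows from $j\phi \leq \pi - 2.71\phi \leq \pi - 2p$. Since sine is symmetric about $\pi/2$ and positive on $(0,\pi)$, its minimum over $[p,\pi-p]$ is attained at the endpoints, giving $\sin(j\phi+p) \geq \sin(p) \geq \sin(1.35\phi)$ (using that $p,1.35\phi$ are both small so sine is increasing there).

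With this uniform floor in hand, I would feed it through the same chain of estimates that produced~\ref{lobound}, obtaining
\[
\frac{F_j}{F_0} \;>\; C(\alpha)\,\sin(1.35\phi)\,\abs{z_0}^{-j} \;-\; 0.1\cdot 0.9^{-j},
\]
where $C(\alpha)$ collects the positive factors of Lemma~\ref{ressum-lem} and the contour integral estimate of Lemma~\ref{D-lem}. The main obstacle is that $C(\alpha)$ contains the factor $1/\abs{z_0-x_0}$, which diverges as $\alpha\to\alpha_c$. I would resolve this by exploiting the companion collapse $\sin(1.35\phi)\sim 1.35\phi \to 0$: using identity~\ref{wurz} together with the Taylor expansions $\phi/\sin\phi = 1 + \phi^2/6 + O(\phi^4)$ and $\phi\cos\phi/\sin\phi = 1 - \phi^2/3 + O(\phi^4)$, one finds $w\abs{z_0-x_0} = \phi + O(\phi^3)$, so the apparent singularity in $C(\alpha)$ cancels the zero of $\sin(1.35\phi)$ exactly, leaving the product $C(\alpha)\sin(1.35\phi)$ bounded below by a positive constant on a whole neighborhood of $\alpha_c$.

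Once this cancellation is quantified, the conclusion reduces to an exponential comparison: because $\abs{z_0}\to 1/w\approx 0.124 < 0.9$, the principal term grows like $(1/0.124)^j$ while the error grows only like $(1/0.9)^j$, so the right-hand side above is strictly positive for every $j\geq 1$ in the range (the case $j=0$ being trivial from $F_0>0$ by definition). This establishes $F_j > 0$ throughout $[0,\pi/\phi-2.71]$, and the lower bound $j > \pi/\phi - 2.71$ on the number of rounds needed for containment follows from Lemma~\ref{suff1-lem}.
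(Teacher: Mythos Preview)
Your proposal is correct and follows essentially the same route as the paper: bound $\sin(j\phi+p)$ from below by $\sin(1.35\phi)$ via Lemma~\ref{shift-lem}, feed this into estimate~\ref{lobound}, show that the factor $\sin(1.35\phi)/(w\abs{z_0-x_0})$ tends to $1.35$, and finish with the exponential comparison $\abs{z_0}^{-j}$ versus $0.9^{-j}$. The one place you diverge is in computing that limit: the paper appeals to the geometric triangle of Lemma~\ref{tria-lem} and the law of sines to identify $\sqrt{\phi^2/\sin^2\phi - 2\phi\cot\phi + 1}$ with the side $C$ and then uses $\gamma\to\pi/2$, $\tau\to 0$, whereas you go straight through the Taylor expansions of $\phi/\sin\phi$ and $\phi\cot\phi$ to get $w\abs{z_0-x_0}=\phi+O(\phi^3)$. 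Your computation is more direct and avoids the auxiliary lemma, but both arrive at the same numerical limit and the overall architecture of the argument is identical.
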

\begin{proof}
By the preceeding discussion, estimate~\ref{lobound} holds for each $j$ that stays below this bound.
As $\phi$ tends to~0 we get
\begin{eqnarray*}
\frac{1}{w} \, \frac{1}{\abs{z_0-x_0}} \,  \sin(1.35 \phi) \ &=& \ 
       \frac{\sin(1.35 \phi)}{\sqrt{w^2 \rho^2 -2w \rho \cos(\phi) +1}}  \\
          &=& \ \frac{\sin(1.35 \phi)}{\sqrt{\frac{\phi^2}{\sin^2\phi} -2\phi \frac{\cos\phi}{\sin\phi} +1}} \\
          &\sim& \ 1.35 \, \frac{\sin\phi}{\sqrt{\frac{\phi^2}{\sin\phi^2} -2\phi \frac{\cos\phi}{\sin\phi} +1}} \\
          &=& \ 1.35 \, \sin(\gamma + \tau) \\
          &\sim& \, 1.35.
\end{eqnarray*}
Here, the first equality follows from~\ref{wurz} and the second, from~\ref{philab}. Then we have applied
l'Hospi\-tal's rule, and the next line follows from
Lemma~\ref{tria-lem}. Indeed, the square root is equal to $C$ in Figure~\ref{fulltria-fig}, and we can apply
the law of sines together with the fact that $\gamma$ goes to $\pi/2$, and $\tau$ to zero.

Substituting in~\ref{lobound} the other limit values (non of which is critical) we find
\begin{eqnarray*}
\frac{F_j}{F_0} \ &>& \  0.091 \,\cdot \, 7.82 \, \cdot \, 1.35 \, \cdot \, \abs{z_0}^{-j} \ - \ 0.1 \, \cdot \,  0.9^{-j}  \\
                         &\geq& \ 0.96 \, \cdot \, 0.1239^{-j} \ - \ 0.1 \, \cdot \,  0.9^{-j}   \\
                         &>& 0.
\end{eqnarray*}
Here 7.82 is the limit of $\frac{e^{va}}{\abs{z_0}} -r$ as $\alpha$ tends to $\alpha_c$. This completes the proof of 
Theorem~\ref{lowbo-theo}.
\end{proof}

Figure~\ref{j-fig} shows how many rounds the fighter needs to contain the fire, depending on her speed $v$.
\begin{figure}
\begin{center}
\includegraphics[scale=0.3]{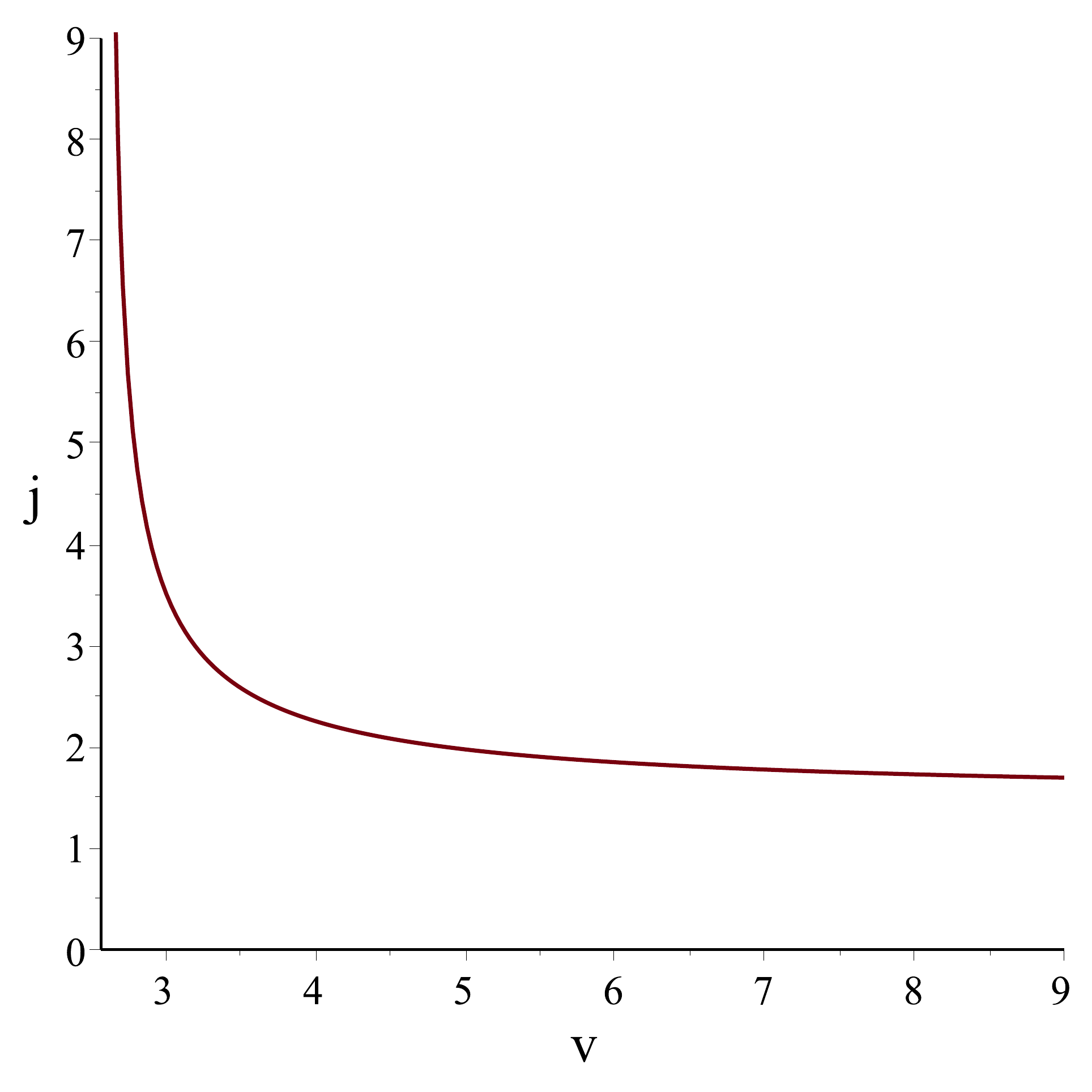}
\caption{The approximate number of rounds, $j$, barrier curve $\mbox{FF}_v$ needs before closing on itself.} 
\label{j-fig}
\end{center}
\end{figure}
%

\section{Lower bound}\label{lobowrap-sec}

In this section a barrier curve $S$ is called {\em spiralling} if it starts on the boundary of a fire of radius $A$, and visits the four coordinate half-axes in counterclockwise order and at increasing distances from the origin.  We are proving the following.

\begin{theorem}   \label{lowbound-theo}
In order to contain a fire by a spiralling barrier, the fighter needs speed 
\[
v \ > \ \frac{1+\sqrt{5}}{2} \, \approx \, 1.618,
\]
the golden ratio.
\end{theorem}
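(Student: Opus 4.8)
The plan is to bound the progress a spiralling fighter can make per half-round, where a half-round is the portion of the curve between consecutive visits to two perpendicular coordinate half-axes. The key quantity to track is the sequence of distances $d_0 = A \le d_1 \le d_2 \le \dots$ at which the curve crosses the successive half-axes, in cyclic order. The fire, spreading at unit speed from the origin, reaches the point of the curve on a given half-axis at distance $d_k$ no later than time $d_k$ (in fact, exactly $d_k$ if no shorter geodesic around an already-built barrier exists, and the straight segment is always an upper bound on arrival time, which is the direction we need). So the fighter, who must be present at that point when she builds it, has built her entire barrier up to that point by time $d_k$, meaning the total curve length from the start to the crossing at distance $d_k$ is at most $v\,d_k$.

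\medskip
First I would set up the geometry: fix the crossing point $P_k$ on one half-axis at distance $d_k$ from the origin, and the crossing point $P_{k+1}$ on the next (perpendicular) half-axis at distance $d_{k+1} \ge d_k$. The straight-line distance $|P_k P_{k+1}|$ is $\sqrt{d_k^2 + d_{k+1}^2} \ge d_{k+1}$, and since the curve arc from $P_k$ to $P_{k+1}$ is at least as long as this chord, the arc length between these two crossings is at least $\sqrt{d_k^2+d_{k+1}^2}$. Accumulating: the total length of the curve up to $P_{k+1}$ is at least $\sum_{i=0}^{k}\sqrt{d_i^2 + d_{i+1}^2}$ (taking $d_0 = A$ as the initial fire radius and the curve starting on the fire boundary). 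Comparing with the time budget, $\sum_{i=0}^{k}\sqrt{d_i^2 + d_{i+1}^2} \le v\, d_{k+1}$ must hold for every $k$ if the fighter is to keep up. Then I would argue that for the barrier to close — i.e. for the fire to be contained — the sequence $d_k$ cannot grow without bound forever in a way consistent with closure; more carefully, containment forces the curve to eventually "catch its own tail", so one examines what growth rate of $d_k$ is forced by the inequality and shows that if $v \le \varphi = \frac{1+\sqrt5}{2}$ the forced growth is incompatible with ever closing the loop (the distances are driven up geometrically, the barrier keeps spiralling outward, and never surrounds the fire).

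\medskip
The cleanest route to the constant $\varphi$: suppose the ratios $\rho_k := d_{k+1}/d_k \ge 1$. Drop all but the last two terms of the sum (or, better, use a two-term tail) to get $\sqrt{d_{k-1}^2 + d_k^2} + \sqrt{d_k^2 + d_{k+1}^2} \le v\, d_{k+1}$, hence $\sqrt{1/\rho_{k-1}^2 + 1} \cdot (d_k/d_{k+1}) + \sqrt{1/\rho_k^2 + 1} \le v$, i.e. (using $d_k/d_{k+1} = 1/\rho_k$) $\frac{\sqrt{1+1/\rho_{k-1}^2}}{\rho_k} + \sqrt{1 + 1/\rho_k^2} \le v$. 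To contain the fire the curve must come back around, which forces infinitely many half-axis crossings with the distances staying bounded is impossible unless... — here is the crux — I would show that containment actually requires $\liminf \rho_k = 1$ (the curve must eventually stop growing to close up), and plugging $\rho_{k-1}, \rho_k \to 1$ into the inequality gives $1 + \sqrt 2 \ge v$? That is too weak, so the constraint must be organized differently: one wants the worst case where a single coil has to pass \emph{outside} the previous coil and then come back in, and the golden ratio emerges from the recursion $d_{k+1} = d_k + d_{k-1}$ being the threshold case (Fibonacci-type growth), with the characteristic equation $x^2 = x+1$. So I would aim to derive, from the length-versus-time budget applied across one full turn plus the requirement that the final coil reach back down to meet an earlier coil, a recursive inequality whose borderline solution is $d_{k+1}^2 = d_k^2 + d_{k-1}^2$ or $d_{k+1} = d_k + d_{k-1}$, forcing $v > \varphi$.

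\medskip
The main obstacle I anticipate is making precise the step "to contain the fire the barrier must close on itself, and this forces the distance sequence to be controlled from both sides." The inequality from the speed constraint only gives an \emph{upper} bound on how fast $d_k$ can grow; I also need a \emph{lower} bound tied to the topological requirement that the curve eventually encircle the origin — otherwise nothing prevents a slow fighter from spiralling forever without ever closing, which is exactly the failure mode, so the theorem's content is that even allowing the curve never to close, speed $\le \varphi$ cannot enclose the fire. Concretely, I expect to argue: when the curve returns to a half-axis, to enclose the fire it must cross at a distance \emph{no larger} than the distance at which some earlier coil crossed a nearby axis region (so the new coil is "inside" enough to seal the gap); combined with the monotone-increasing-distance hypothesis this pins the distances, and the arithmetic then yields $\varphi$. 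I would also need to handle the base case cleanly (the first crossing at distance $d_0 = A$, with the curve starting on the fire's boundary so the relevant first arc already costs something), and verify that the shortest-path/geodesic arrival time of the fire is indeed at most the straight-line distance used in the budget — which is immediate since straight segments are geodesics in the obstacle-free plane and adding barriers only slows the fire.
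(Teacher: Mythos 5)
Your central ``time budget'' inequality --- total curve length up to the crossing at distance $d_{k+1}$ is at most $v\,d_{k+1}$ --- is not a valid necessary condition, and the error is in the very sentence where you justify it. Barriers only \emph{lengthen} the fire's geodesic to a point, so the straight-line distance from the origin is a \emph{lower} bound on the fire's arrival time, not the upper bound you need. After the first coil the straight radial path to $P_{k+1}$ is blocked by the earlier crossings of that same half-axis; the fire must travel around the outside of the spiral, and its arrival time is comparable to the total barrier length, far exceeding $d_{k+1}$. (The curve $\mbox{FF}_v$ itself violates your inequality on later coils while being a perfectly legal strategy.) You also concede that, even granting the budget, your arithmetic only reaches $1+\sqrt2$ and that the Fibonacci recursion is a guess; and the closure/topological difficulty you flag at the end is a symptom of having set the problem up around the wrong constraint.

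The idea you are missing is to track the fire on the \emph{outside} of the barrier and to prove by induction the invariant that, when the fighter reaches the crossing $p_{i+1}$, an interval of length $A$ just beyond the previous crossing $p_i$ is already burning (this is Lemma~\ref{inv-lem}). Two inequalities drive the induction for the arc of length $y$ from $p_i$ to $p_{i+1}$: the chord bound $p_{i+1}\le\sqrt{p_i^2+p_{i+1}^2}\le y$, and the requirement that the fire advancing along the half-axis of $p_{i+1}$ --- which is still unobstructed from above and has already burned to $A+x/v$ by the time the fighter leaves $p_i$, where $x$ is the length of the previous arc --- not overtake the fighter: $A+x/v+y/v\le p_{i+1}$. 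Together these yield $y/v\ \ge\ \frac{A}{v-1}+\frac{x}{v(v-1)}$, which exceeds $A+x$ precisely when $v^2-v<1$, i.e.\ when $v$ is below the golden ratio; and $y/v>A+x$ is exactly the time the fire needs to crawl along the outer side of the previous arc and a further distance $A$ past $p_i$, reestablishing the invariant. Since the fire has always already escaped beyond the last-but-one axis crossing, containment simply never occurs, and no separate ``the curve must close'' argument is needed.
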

\begin{proof}
Now let $S$ be a spiralling curve, and assume that the fighter proceeds
at maximum speed $v \leq (1 +\sqrt 5)/2$.
Let $p_0, p_1, p_2, \ldots$ denote the points on the coordinate axes visited, in this order, by $S$.
The following lemma shows that $S$ cannot succeed because there is still fire burning outside the barrier on the axis previously visited.
\begin{figure}
\begin{center}
\includegraphics[scale=0.6]{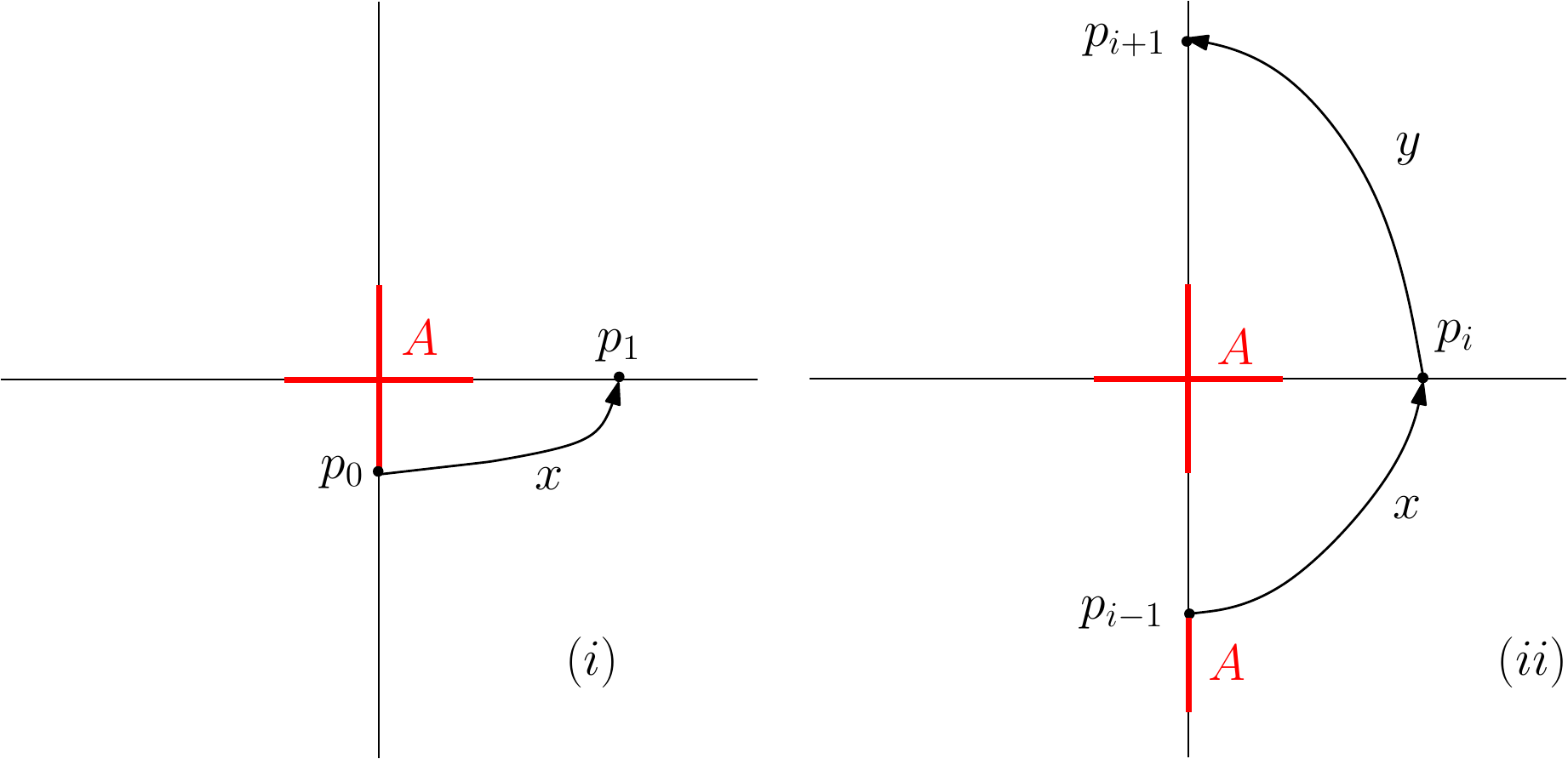}
\caption{Proof of Lemma~\ref{inv-lem}.}
\label{lowbo-fig}
\end{center}
\end{figure}
\begin{lemma}     \label{inv-lem}
Let $A$ be the initial fire radius. When $S$ visits point $p_{i+1}$, the interval $[p_i,p_i+{\tt sign}(p_i)A]$ on the axis visited before is on fire.
\end{lemma}
\begin{proof}
The proof is by induction on $i$. Suppose barrier $S$ is of length $x$ between $p_0$ and $p_1$, as shown 
in Figure~\ref{lowbo-fig}~(i). While this part is under construction, the fire advances $x/v$ along the positive $X$-axis, so that
$A+x/v \leq p_1 \leq x$ must hold, or 
\[
     \frac{x}{v} \, \geq \, \frac{1}{v-1} A \, > \, A;
\]
the last inequality follows from $v<2$. Thus, the fire has enough time to move a distance of $A$ from $p_0$ downwards along
the negative $Y$-axis.

Now let us assume that the fighter builds a barrier of length $y$ between $p_i$ and $p_{i+1}$,  as shown 
in Figure~\ref{lowbo-fig}~(ii). By induction, the interval of length $A$ below $p_{i-1}$ is on fire. Also, when the fighter moves
on from $p_i$, there must be a burning interval of length at least $A+x/v$ on the positive $Y$-axis which is not bounded by a
barrier from above. This is clear if $p_{i+1}$ is the first point visited on the positive $Y$-axis, and it follows by induction, otherwise. Thus, we must have $A+x/v +y/v \leq p_{i+1} \leq y$, hence
\[
    \frac{y}{v}  \, \geq \, \frac{1}{v-1} A \,  + \, \frac{1}{v(v-1)} x \,  > \, A+x.
\]
The rightmost inequality follows since $v$ is supposed to be smaller than the golden ratio, which
satisfies $X^2 -X -1 =0$; hence, $v^2-v <1$. 
This shows that the fire has time to crawl along the barrier
from $p_{i-1}$ to $p_i$, and a distance $A$ to the right, as the fighter moves to $p_{i+1}$, completing the proof of Lemma~\ref{inv-lem} and of Theorem~\ref{lowbound-theo}.
\end{proof}

\end{proof}

\end{document}